\newcommand{\integs}{\mathbb{Z}}
\newcommand{\R}{\mathbb{R}}
\newcommand{\C}{\mathbb{C}}
\newcommand{\real} {\R\mathrm{e}}
\newcommand{\id}{\mathrm{I}}
\newcommand{\srad}{\boldsymbol{\rho}}
\newcommand{\sw}{\sigma}
\newcommand{\s}{N}
\newcommand{\Ind}{\underline{\s}}
\newcommand{\Met}{\mathcal{M}}
\newcommand{\dfn}{\triangleq}
\newcommand{\rb}{\mathbf{z}}
\newcommand{\bw}{\mathbf{w}}
\newcommand{\cle}{\preceq}
\newcommand{\cge}{\succeq}
\newcommand{\diag}{\mathrm{diag}}
\newcommand{\ab}{\mathbf{a}}
\newtheorem{rem}{Remark}
\newtheorem{theorem}{Theorem}
\newtheorem{cor}[theorem]{Corollary}
\newtheorem{defn}{Definition}
\newtheorem{lem}{Lemma}
\begin{document}


  \title{Bounds and Invariant Sets for a Class of Switching Systems
    with Delayed-state-dependent Perturbations}


  \author{Hernan Haimovich$^\dagger$\quad and\quad Mar\'{\i}a M. Seron$^\ddagger$\\[1em]
    $^\dagger$ CONICET and Departamento de Control, Escuela de
    Ing. Electr\'onica,\\
    Universidad Nacional de Rosario, Riobamba 245bis, 2000 Rosario, Argentina.\\[1mm]
    $^\ddagger$ Centre for Complex Dynamic Systems and Control,\\
    The University of Newcastle, Callaghan, NSW 2308, Australia.}

  \date{}

\maketitle

\begin{abstract}
  We present a novel method to compute componentwise transient bounds,
  componentwise ultimate bounds, and invariant regions for a class of
  switching continuous-time linear systems with perturbation bounds
  that may depend nonlinearly on a delayed state. The main advantage
  of the method is its componentwise nature, i.e. the fact that it
  allows each component of the perturbation vector to have an
  independent bound and that the bounds and sets obtained are also
  given componentwise. This componentwise method does not employ a
  norm for bounding either the perturbation or state vectors, avoids
  the need for scaling the different state vector components in order
  to obtain useful results, and may also reduce conservativeness in
  some cases.
  We give conditions for the derived bounds to be of local or
  semi-global nature. In addition, we deal with the case of
  perturbation bounds whose dependence on a delayed state is of affine
  form as a particular case of nonlinear dependence for which the
  bounds derived are shown to be globally valid. A sufficient
  condition for practical stability is also provided.  
  The present paper builds upon and extends to switching systems with
  delayed-state-dependent perturbations previous results by the
  authors. In this sense, the contribution is three-fold: the
  derivation of the aforementioned extension; the elucidation of the
  precise relationship between the class of switching linear systems
  to which the proposed method can be applied and those that admit a
  common quadratic Lyapunov function (a question that was left open in
  our previous work); and the derivation of a technique to compute a
  common quadratic Lyapunov function for switching linear systems with
  perturbations bounded componentwise by affine functions of the
  absolute value of the state vector components. In this latter case,
  we also show how our componentwise method can be combined with
  standard techniques in order to derive bounds possibly tighter than those
  corresponding to either method applied individually.
\end{abstract}



\section{Introduction}
\label{sec:introduction}

Switched systems are dynamical systems that combine a finite number of
subsystems by means of a switching rule
\cite{linant_tac09,liberzon_book03}. The stability of switched systems
has attracted considerable research attention in recent years
\cite{libmor_csm99,decbra_pieee00,showir_siamrev07,linant_tac09}. In
this paper we are concerned with stability under ``arbitrary
switching'', which refers to problems where the stability properties
of interest hold for every admissible switching signal. In this
context, we refer to a switched system undergoing arbitrary switching
as a \emph{switching system}, and as a \emph{switching linear system}
if the individual subsystems have linear dynamics. Necessary and
sufficient conditions for the asymptotic stability of the zero
solution of a switching linear system were given in \cite[Theorem
3]{MoP89} and \cite[Theorem~4.1 and Remark~4.1]{Bla00}. In the present
paper we will focus on the ``practical stability'' problem of
analysing the existence and computation of invariant sets and ultimate
bounds for the switching system state trajectories. This type of
stability is important in every practical setting where nonvanishing
perturbations (also named persistent disturbances) may act on the
system \cite[Ch.~9]{khalil_book02}. We consider switching systems with
a switching linear nominal (unperturbed) system affected by
perturbations that may be nonvanishing and depend nonlinearly on a
delayed state.

Standard methods for the computation of bounds and invariant sets are
based on the use of a Lyapunov function
\cite{khalil_book02}. Arguably, Lyapunov-function-based methods are
the most powerful and widely applicable, although their inherent
difficulty is the obtention of a suitable Lyapunov function. When the
nominal system is linear, however, a quadratic Lyapunov function can
easily be computed via solving a Lyapunov equation, but the bounds so
obtained may be conservative, even for linear systems (see, e.g.,
Section 1 of \cite{kofhai_ijc07}). State bounds computed by means of a
quadratic Lyapunov function are given as a bound on the norm, usually
the 2-norm, of the state vector and usually require a bound on the
norm of the perturbation vector.  The aforementioned conservativeness
may be due to (a) the information on the different bounds for each
component of the perturbation vector is lost when taking its norm and
(b) the bounds corresponding to different state vector components are
substantially different and hence its 2-norm is not the most suitable
for bounding. Problem (b) may be ameliorated by properly scaling the
state vector components. In order to avoid or at least reduce the
effect of both problems (a) and (b), then Lyapunov functions of a form
more complicated than quadratic may be employed. Likewise, for
switching systems with a switching linear nominal system, a quadratic
Lyapunov function common to all linear subsystems can be computed via
linear matrix inequalities (LMIs) in case one exists (see, for
example, Section~4.3 of \cite{showir_siamrev07} and the references
therein).  As in the non-switching case, the bounds thus obtained may
be conservative in some
cases.  


The present paper follows a methodology which differs from the one
just described in that the use of either a norm of the state or a
Lyapunov function can be avoided. Moreover, this methodology can be
easily combined with Lyapunov analysis in order to possibly improve on
the results of either method applied individually. The methodology
that we employ is based on componentwise analysis, avoids the need for
scaling individual state components, and builds upon and extends to
switching systems with delayed-state-dependent perturbations previous
results of
\cite{kofhai_ijc07,kofserhai_auto08,haiser_cdc09,haiser_auto10}. In
\cite{kofhai_ijc07}, a method to compute componentwise ultimate bounds
for perturbed (non-switching) linear systems is given. The
perturbation bound is allowed to depend nonlinearly on the system
state. Ultimate bounds are derived that are global (valid for every
initial condition) when the perturbation bound is constant and local
(valid only when the initial state is in a specific region) in the
more general case of state-dependent perturbation bounds. Global
componentwise ultimate bounds for perturbation bounds that have affine
dependence on a delayed system state are derived in Section~3 of
\cite{kofserhai_auto08}, jointly with a sufficient condition for
practical stability. In \cite{haiser_cdc09,haiser_auto10}, a method to
derive global componentwise transient and ultimate bounds was proposed
for a class of switching linear systems with constant perturbation
bounds. It was shown in \cite{haiser_auto10} that the proposed method
can be applied when the switching linear system is close to being
simultaneously triangularizable. In such a case, a common quadratic
Lyapunov function (CQLF) exists for the switching system. However, the
precise relationship between the class of switching linear systems to
which the proposed method can be applied and those that admit a CQLF
was left as an open question.

The present paper provides three contributions. The first contribution
is to answer the aforementioned open question: the class of switching
linear systems to which our componentwise bound and invariant set
method can be applied is strictly contained in the class of switching
linear systems that admit a CQLF, although the switching linear system
need not be close to simultaneously triangularizable. This
relationship was reported by Mori et al. in \cite{mormor_cdc01} but
the proof was not given. We provide a proof and, moreover, extend it
so that it becomes useful in the derivation of our third
contribution. The second contribution of the paper is to combine and
extend the previous results in
\cite{kofhai_ijc07,kofserhai_auto08,haiser_cdc09,haiser_auto10} by
providing transient bounds, ultimate bounds, and invariant regions
based on componentwise analysis for a class of switching
continuous-time linear systems with perturbation bounds that may
depend nonlinearly on a delayed state. This kind of setting can
describe, for example, switching linear systems with uncertainty in
the state evolution matrix, switching linear systems with an uncertain
time delay and, more generally, switching nonlinear systems expressed
as their switching linear approximation perturbed by an additive
disturbance with a bound depending nonlinearly on the system state. We
derive conditions for the bounds to be of local or semi-global
nature. We also address the particular case of perturbation bounds
that have affine dependence on a delayed state. In this particular
case, the bounds derived are shown to be of global nature and an
extension of {the sufficient condition} for practical stability of
\cite[Section~3]{kofserhai_auto08} is provided. The third contribution
is to provide a technique to compute a CQLF for a class of switching
linear systems with perturbations bounded componentwise by affine
functions of the absolute value of the state vector components
(provided no delays are present). The CQLF so derived can be used to
compute ultimate bounds for this class of systems. Moreover, both the
componentwise method and the Lyapunov technique can be combined to
obtain tighter bounds than could be obtained by either methodology
applied individually. The combination of both methodologies is
illustrated by means of a numerical example. The current paper
subsumes all the aforementioned previous bound computation results
\cite{kofhai_ijc07,kofserhai_auto08,haiser_auto10} for (switching and
non-switching) continuous-time systems, in the sense that bounds for
each of the cases considered in these results can be obtained by means
of the current results (although the bounds obtained may not be
identical). Although similar ideas are employed, the extension of the
previous results to derive the ones presented in the current paper is
not straightforward.
%
Some of the results in the current paper have been presented in
\cite{haiser_aucc11,haiser_rpic11}.

The remainder of the paper proceeds as follows. We conclude this
introductory section with a summary of the notation employed
throughout the paper. Section~\ref{sec:problem-formulation} presents
the problem formulation together with some preliminary definitions and
properties. Section~\ref{sec:cont-time-switch} contains the main
results of the paper, and is organised into four subsections
presenting, respectively, an overview of previous results for constant
perturbation bounds, the connection between the latter results and the
existence of a CQLF, the new results for the case of nonlinear
perturbation bounds, and the new results for the special case of
affine perturbation bounds, including the connection with CQLF when no
delay is present. Section~\ref{sec:ct-example} illustrates the results
by means of a numerical example. Section~\ref{sec:conc} provides
conclusions and outlines directions for future work. To ease
readability, proofs are provided in the appendix.

\textbf{Notation.}  $\integs$, $\R$ and $\C$ denote the sets of
integer, real and complex numbers, and $0$ denotes the zero scalar,
vector or matrix, depending on the context. $\R_{+}$ and $\R_{+0}$
denote the positive and nonnegative real numbers, respectively, and
similarly for $\integs_{+}$ and $\integs_{+0}$. If $M$ is a matrix,
then $M'$ denotes its transpose, $M^*$ its conjugate transpose, and
$|M|$ is the matrix whose entries are the magnitude of the
corresponding entries in $M$. If $P$ is a square matrix, then
$\srad(P)$ denotes its spectral radius, $\ab(P)$ its spectral abscissa,
and $P>0$ ($P<0$) means that $P$ is positive (negative) definite. If
$x(t)$ is a vector-valued function, then $\limsup_{t \rightarrow
  \infty} x(t)$ denotes the vector obtained by taking $\limsup_{t
  \rightarrow \infty}$ of each component of $x(t)$. Similarly,
`$\lim$' and `$\max$' denote componentwise operations on a vector or
matrix. The expression $x \preceq y$ ($x \prec y$) denotes the set of
componentwise inequalities $x_i \le y_i$ ($x_i < y_i$) between the
elements of the real vectors $x$ and $y$, and similarly for $x \succeq
y$ ($x \succ y$) and in the case when $x$ and $y$ are matrices. If $T
: \R^n_{+0} \to \R^n_{+0}$, then $T^k$ denotes the iteration of $T$,
that is, the maps defined by $T^1(x) = T(x)$ and $T^{k+1}(x) =
T(T^k(x))$. The index set $\{1,2,\ldots,N\}$ is denoted $\Ind$ and
$\mathbf{i}$ denotes $\sqrt{-1}$. Employing this notation, note that
$P\succ 0$ means that every entry of $P$ is positive and $P>0$ that
$P$ is positive definite.


\section{Problem Formulation}
\label{sec:problem-formulation}

In this section, we formulate the problem to be addressed, followed by
some preliminary definitions and properties.


\subsection{Problem statement}
\label{sec:problem-statement}

We consider switching continuous-time 
perturbed systems of the form
\begin{align}
  \label{eq:system}
  \dot{x}(t) &=A_{\sw(t)} x(t) + H_{\sw(t)} w_{\sw(t)}(t),
\end{align}
where $x(t) \in \R^n$ is the system state, $\sw(t) \in \Ind \triangleq
\{1,2, \dots, N\}$ is the switching function, $A_i \in \R^{n\times
  n}$, $H_i \in \R^{n\times k_i}$ for $i\in\Ind$, and the perturbation
vectors $w_i(t) \in \R^{k_i}$ satisfy the componentwise bound
\begin{equation} \label{eq:boundw}
  |w_i(t)|\preceq \delta_i( \theta(t) ) \; \text{ for all } t\geq 0,
  \; \text{ for }i\in\Ind,
\end{equation}
with continuous bounding functions $\delta_i : \R^n_{+0} \to
\R^{k_i}_{+0}$ and $\theta(t) \in \R^n_{+0}$ defined as
\begin{equation} \label{eq:barx}
  \theta(t)\dfn \max_{t-\bar{\tau}\leq \tau \leq t} |x(\tau)|,
\end{equation}
where $\bar{\tau}\geq 0$ and the maximum is taken componentwise. 

Note that for each $i\in\Ind$, \eqref{eq:boundw} expresses a bound for
each one of the $k_i$ components of the perturbation vector $w_i(t)$,
and that the maximum in (\ref{eq:barx}) denotes a componentwise
operation.
\begin{rem} 
  The setting \eqref{eq:system}--\eqref{eq:barx} can
  describe, inter-alia, the following situations:
  \begin{itemize}
  \item Uncertainty in the system evolution matrix, where $
    \dot{x}(t)$ 
    has the form
    $(A_{\sw(t)} + \Delta A_{\sw(t)}(t))x(t)$, and $|\Delta A_i(t)|
    \preceq \overline{\Delta A_i}$, for all $t \ge 0$ and $i\in\Ind$;
    in this case, we can take $H_i=\id$ in \eqref{eq:system}
    , $\delta_i(\theta) =\overline{\Delta A
    }\theta$ in~\eqref{eq:boundw}, and $\bar{\tau} = 0$ in
    \eqref{eq:barx}.
  \item Uncertain time delays, where $ w_i(t) = F_i x(t-\tau_i)$, and
    $0\le \tau_i \le \tau_{\max}$; in this case, we can take
    $\delta_i(\theta) = |F_i|\theta$ in~\eqref{eq:boundw}, and
    $\bar{\tau} = \tau_{\max}$ in \eqref{eq:barx}.
  \item Disturbances with constant bounds: $\delta_i(\theta)=\bw_i$
    in~\eqref{eq:boundw}.
  \item Switching nonlinear systems where $ \dot{x}(t)$ 
    has the form 
    $f_{\sw(t)}(x(t))$; in this case we may take $A_i = \frac{\partial
      f_i}{\partial x}(x_0)$, $H_i = \id$, $\bar\tau=0$,
    $\delta_i(\theta) = \max_{x : |x| \cle \theta} |f_i(x) - A_ix|$.
  \end{itemize}
\end{rem}

The problem of interest is to derive transient bounds, ultimate
bounds, and invariant sets for switching systems of the form
(\ref{eq:system}) 
with
perturbations bounded as in (\ref{eq:boundw})--(\ref{eq:barx}). This
will be addressed in Section~\ref{sec:cont-time-switch}
. In the next
subsection, we give some definitions and preliminary results related
to the concept of Metzler matrices and to a specific class of
nonnegative functions.

\subsection{Definitions and properties}
\label{sec:definitions}

\begin{defn}[Metzler]
  \label{defn:Metzler}
  A matrix $\Lambda \in \R^{n\times n}$ is \emph{Metzler} if its
  off-diagonal entries are nonnegative.
\end{defn}

Given an arbitrary matrix $N\in\C^{n\times n}$, we define
$\Met(N)\in\R^{n\times n}$ as the matrix whose entries satisfy
\begin{equation}
  \label{eq:31a}
  [\Met(N)]_{i,k} =
  \begin{cases}
    \real\{N_{i,k}\} &\text{if $i=k$,}\\
    |N_{i,k}|        &\text{if $i\neq k$.}
  \end{cases}
\end{equation}
Note that $\Met(N)$ is Metzler for every $N\in\C^{n\times n}$.

The following Lemma gives properties of Metzler matrices.
\begin{lem}
  \label{lem:propMetzler}
  Let $\Lambda,M \in \R^{n\times n}$ and $N\in\C^{n\times n}$. Then,
  \begin{enumerate}[a)]
  \item $\Lambda$ is Metzler if and only if $e^{\Lambda t} \succeq 0$
    for all~$t\ge 0$.
  \item If $\Lambda$ is Metzler, then it is Hurwitz if and only if $-\Lambda^{-1}
    \succeq 0$.
  \item $\Lambda$ is Metzler and Hurwitz if and only if $-\Lambda$ is
    an M-matrix.
  \item If $M=M'$ and is Metzler, then $x'Mx \le |x|' M |x|$ for all $x\in\R^n$.
  \item If $N=N^*$, then $z^* N z \le |z|' \Met(N) |z|$ for all $z\in\C^n$.
  \end{enumerate}
\end{lem}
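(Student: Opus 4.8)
The plan is to separate the lemma into the two purely algebraic items d) and e) and the three spectral items a)--c). Item d) is the special case of e) in which $z=x$ is real and $N=M$ is real and symmetric, because then $\Met(M)=M$ (the diagonal of $M$ is real and its off-diagonal entries are nonnegative since $M$ is Metzler), and $z^{*}Nz=x'Mx$, $|z|'\Met(N)|z|=|x|'M|x|$; so it is enough to prove e). For e), I would expand $z^{*}Nz=\sum_{i}N_{i,i}|z_{i}|^{2}+\sum_{i\neq k}\bar z_{i}N_{i,k}z_{k}$, noting $N_{i,i}\in\R$ and $N_{i,i}=[\Met(N)]_{i,i}$, and then pair the $(i,k)$ and $(k,i)$ terms: since $N=N^{*}$, their sum equals $2\,\real\{\bar z_{i}N_{i,k}z_{k}\}\le 2|z_{i}|\,|N_{i,k}|\,|z_{k}|=2|z_{i}|\,[\Met(N)]_{i,k}\,|z_{k}|$. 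Summing over $i<k$ reproduces exactly $|z|'\Met(N)|z|$, which proves e) and hence d).

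For a), the key device is that a Metzler $\Lambda$ satisfies $\Lambda+c\,\id\succeq 0$ once $c>0$ is large enough. Then $e^{\Lambda t}=e^{-ct}e^{(\Lambda+c\id)t}$, and $e^{(\Lambda+c\id)t}=\sum_{m\ge 0}(\Lambda+c\id)^{m}t^{m}/m!$ is a convergent sum of nonnegative matrices, so $e^{\Lambda t}\succeq 0$ for all $t\ge 0$. Conversely, if $e^{\Lambda t}\succeq 0$ for all $t\ge 0$, then for $i\neq k$ we have $[\Lambda]_{i,k}=\lim_{t\downarrow 0}\big([e^{\Lambda t}]_{i,k}-[\id]_{i,k}\big)/t=\lim_{t\downarrow 0}[e^{\Lambda t}]_{i,k}/t\ge 0$, so $\Lambda$ is Metzler.

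For b), one implication is immediate from a): if $\Lambda$ is Metzler and Hurwitz, then $-\Lambda^{-1}=\int_{0}^{\infty}e^{\Lambda t}\,dt\succeq 0$ because the integrand is nonnegative. The reverse implication is the substantive part and is where I expect the main obstacle; I would handle it via the Perron--Frobenius theorem. Write $B\dfn\Lambda+c\,\id\succeq 0$ with $c$ large; then $\srad(B)$ is an eigenvalue of $B$ with some eigenvector $v\succeq 0$, $v\neq 0$, so $\lambda_{\max}\dfn\srad(B)-c$ is a real eigenvalue of $\Lambda$ and, since $\real\{\lambda\}+c\le|\lambda+c|\le\srad(B)$ for every eigenvalue $\lambda$ of $\Lambda$, it is the eigenvalue of $\Lambda$ of largest real part. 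If $\Lambda$ were not Hurwitz then $\lambda_{\max}\ge 0$: the case $\lambda_{\max}=0$ contradicts the invertibility of $\Lambda$, and if $\lambda_{\max}>0$ then $-\Lambda^{-1}v=-\lambda_{\max}^{-1}v$ has a strictly negative entry, contradicting $-\Lambda^{-1}\succeq 0$ together with $v\succeq 0$. Hence $\lambda_{\max}<0$, i.e. $\Lambda$ is Hurwitz.

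Finally, c) follows by combining a)--b) with the definition of a (nonsingular) M-matrix as a matrix of the form $s\,\id-B$ with $B\succeq 0$ and $s>\srad(B)$: taking $B=\Lambda+s\,\id$ shows $-\Lambda$ has this form with $B\succeq 0$ exactly when $\Lambda$ is Metzler, and the condition $s>\srad(B)$ is, by the Perron--Frobenius argument just used, equivalent to $\lambda_{\max}(\Lambda)<0$, i.e. to $\Lambda$ being Hurwitz; alternatively one may invoke the standard equivalence between $-\Lambda$ being an M-matrix and the nonnegativity of $-\Lambda^{-1}$, which is precisely b). The only genuinely non-routine ingredient in the whole lemma is this Perron--Frobenius step for the converses in b) and c); everything else is bookkeeping with nonnegative matrices.
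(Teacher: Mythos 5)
Your proof is correct in all parts. The paper does not actually prove this lemma itself --- it cites Luenberger (Ch.~6) for a)--b) and Berman--Plemmons (Ch.~6) for c), and declares d)--e) straightforward --- and your arguments (the $\Lambda+c\,\id\succeq 0$ device for a), the integral $-\Lambda^{-1}=\int_0^\infty e^{\Lambda t}\,dt$ plus Perron--Frobenius for b)--c), and the pairing of Hermitian off-diagonal terms for d)--e)) are exactly the standard ones found in those references, so this is essentially the same approach, written out in full.
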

Properties a) and b) can be found in Chapter 6 of
\cite{luenbe_book79}; c) follows from Definition~\ref{defn:Metzler}
and the definition of an M-matrix (see, e.g., Chapter 6 of
\cite{berple_book94}); d) and e) are straightforward.

\begin{defn}[CNI]
  \label{defn:CNI}
  A nonnegative vector function $f: \R^n_{+0} \to \R^m_{+0}$ is said
  to be \emph{Componentwise Non-Increasing} (CNI) if, whenever $x_1,x_2
  \in \R^n_{+0}$ and $x_1 \preceq x_2$, then $f(x_1) \preceq f(x_2)$.
\end{defn}

\begin{rem}
  \label{rem:cni}
  Every continuous function $\hat f: \R^n_{+0} \to \R^m_{+0}$ can be
  overbounded by a continuous CNI function. In particular, the
  tightest continuous CNI overbound of $\hat f$ is the function $f:
  \R^n_{+0} \to \R^m_{+0}$ given by
  \begin{equation}
    \label{eq:46}
    f(x) = \max_{0\cle y\cle x} \hat f(y).
  \end{equation}
\end{rem}

\section{Main Results}
\label{sec:cont-time-switch}

In this section, we begin by briefly reviewing in
Section~\ref{sec:prev-res} our previous result
(Theorem~\ref{thm:mainct} below) for switching linear systems with
constant perturbation bounds \cite{haiser_auto10}.
Section~\ref{sec:relationship-cqlf} provides the first contribution of
the paper by establishing the link between the applicability of the
previous results of \cite{haiser_auto10} and that of the CQLF, a
question that was left open in the latter reference.
The main results of the paper are given in
Sections~\ref{sec:nonlin-pert-bnd} and~\ref{sec:lin-pert-bnd}. In
Section~\ref{sec:nonlin-pert-bnd}, we provide novel transient bounds,
ultimate bounds and invariant sets for a class of switching
continuous-time linear systems with perturbations bounded by a
nonlinear function of a delayed state. In
Section~\ref{sec:lin-pert-bnd}, we provide additional results for the
special case of perturbation bounds having affine dependence on a
delayed state and also show how to compute a CQLF when no delay is
present.  The proofs are given in the appendix.

\subsection{Previous results: Constant perturbation bounds}
\label{sec:prev-res}

The following is a minor modification of Theorem~1
of~\cite{haiser_auto10}.
\begin{theorem}[Theorem~1 of \cite{haiser_auto10}]
  \label{thm:mainct}
  Consider the switching system (\ref{eq:system}) with componentwise
  perturbation bound
  \begin{equation}
    \label{eq:7a}
    |w_i(t)| \preceq \bw_i,
  \end{equation}
  with $\bw_i \in \R^{k_i}_{+0}$. Let $V\in\C^{n\times n}$ be
  invertible and define
  \begin{align}
    \label{eq:88}
    \Lambda_i &\dfn V^{-1}A_i V,\quad M_i \dfn \Met(\Lambda_i),\quad
    \Lambda \dfn \max_{i\in\Ind} M_i
  \end{align}
  where $\Met(\cdot)$ is the operation defined in
  \eqref{eq:31a}. Suppose that $\Lambda$ is Hurwitz. Let $\rb \in
  \R^n_{+0}$ satisfy
  \begin{align}
  \label{eq:9a}
    \rb &\succeq \max_{i\in\Ind} \left[\max_{|w_i|\preceq
        \bw_i} |V^{-1}H_i w_i|\right],
  \end{align}
  and define
  \begin{align}
    \label{eq:86}
    \eta &\dfn \max\big\{ |V^{-1}x(0)|+\Lambda^{-1}\rb, 0 \big\}.
  \end{align}
  Then, the states of system~(\ref{eq:system}) are bounded as
  \begin{align}
    \label{eq:80}
    &|V^{-1}x(t)| \preceq -\Lambda^{-1}\rb + e^{\Lambda t} \eta,
  \end{align}
  for all $t\ge 0$, and 
  ultimately bounded as
  \begin{align}
    \label{eq:29a}
    &\limsup_{t\to\infty} |V^{-1}x(t)| \preceq -\Lambda^{-1} \rb.
  \end{align}
\end{theorem}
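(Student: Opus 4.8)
The plan is to pass to the coordinates $z(t)\dfn V^{-1}x(t)$, in which \eqref{eq:system} becomes $\dot z(t)=\Lambda_{\sw(t)}z(t)+V^{-1}H_{\sw(t)}w_{\sw(t)}(t)$, and then to derive for $|z(t)|$ a componentwise differential inequality whose right-hand side is \emph{independent of the active mode}. Concretely, I would show that for each component index $j$ and every $t\ge 0$,
\[
  D^{+}|z_{j}(t)| \cle [\Lambda]_{jj}\,|z_{j}(t)| + \sum_{k\neq j}[\Lambda]_{jk}\,|z_{k}(t)| + [\rb]_{j} = \bigl[\Lambda|z(t)|+\rb\bigr]_{j},
\]
where $D^{+}$ is the upper right Dini derivative. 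At times $t$ with $z_{j}(t)\neq 0$ one uses $\frac{d}{dt}|z_{j}|=\real\{\bar z_{j}\dot z_{j}\}/|z_{j}|$ and estimates the diagonal term by $\real\{[\Lambda_{\sw(t)}]_{jj}\}=[M_{\sw(t)}]_{jj}\le[\Lambda]_{jj}$, each off-diagonal term by $|[\Lambda_{\sw(t)}]_{jk}|=[M_{\sw(t)}]_{jk}\le[\Lambda]_{jk}$ (these being $\ge 0$ since $\Lambda$ is Metzler), and the perturbation term by $|[V^{-1}H_{\sw(t)}w_{\sw(t)}(t)]_{j}|\le[\rb]_{j}$, which is exactly \eqref{eq:9a} together with \eqref{eq:7a}. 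At times with $z_{j}(t)=0$ one has $D^{+}|z_{j}(t)|=|\dot z_{j}(t)|$ and the same bound follows from the triangle inequality. The crucial point is that $\Lambda|z(t)|+\rb$ does not depend on $\sw(t)$, so the inequality holds verbatim along every admissible switching signal.

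I would then invoke the comparison principle for quasimonotone systems. Since $\Lambda$ is Metzler, $v\mapsto\Lambda v+\rb$ is quasimonotone, so the inequality above with initial value $|z(0)|=|V^{-1}x(0)|$ gives $|z(t)|\cle u(t)$ for all $t\ge 0$, where $u$ solves $\dot u=\Lambda u+\rb$, $u(0)=|V^{-1}x(0)|$. As $\Lambda$ is Hurwitz it is invertible, and the explicit solution is $u(t)=-\Lambda^{-1}\rb+e^{\Lambda t}\bigl(|V^{-1}x(0)|+\Lambda^{-1}\rb\bigr)$. By Lemma~\ref{lem:propMetzler}(a) we have $e^{\Lambda t}\cge 0$, and since $|V^{-1}x(0)|+\Lambda^{-1}\rb\cle\max\{|V^{-1}x(0)|+\Lambda^{-1}\rb,0\}=\eta$ (the quantity in \eqref{eq:86}), entrywise nonnegativity of $e^{\Lambda t}$ yields $u(t)\cle-\Lambda^{-1}\rb+e^{\Lambda t}\eta$, which is \eqref{eq:80}; note $\Lambda^{-1}\rb\cle 0$ by Lemma~\ref{lem:propMetzler}(b), so the truncation in $\eta$ is genuine. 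Finally, $\Lambda$ Hurwitz forces $e^{\Lambda t}\to 0$, so taking componentwise $\limsup$ in \eqref{eq:80} gives \eqref{eq:29a}.

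The step I expect to require the most care is the weak regularity across switching instants: $\dot z$ is only piecewise continuous and each $|z_{j}(\cdot)|$ is merely locally Lipschitz, so the differential inequality must be read in the Dini sense and one must appeal to a version of the comparison principle valid under these hypotheses; this is standard for cooperative comparison systems, and it applies uniformly here precisely because the comparison system $\dot u=\Lambda u+\rb$ is common to all modes, so no dwell-time restriction on $\sw$ is needed. An equivalent, Dini-free route is to argue interval by interval with the variation-of-constants formula $z(t)=e^{\Lambda_{i}(t-t_{0})}z(t_{0})+\int_{t_{0}}^{t}e^{\Lambda_{i}(t-s)}V^{-1}H_{i}w_{i}(s)\,ds$ on each maximal interval where $\sw\equiv i$, using the entrywise bounds $|e^{\Lambda_{i}\tau}|\cle e^{M_{i}\tau}\cle e^{\Lambda\tau}$ for $\tau\ge 0$ (the first because $M_{i}=\Met(\Lambda_{i})$, the second because $M_{i}\cle\Lambda$ and the exponential is monotone on Metzler matrices), and then concatenating the resulting recursion to obtain $|z(t)|\cle e^{\Lambda t}|z(0)|+\int_{0}^{t}e^{\Lambda(t-s)}\rb\,ds$, after which the remaining computation is as above.
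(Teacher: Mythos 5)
Your proposal is correct, but note that the paper itself does not prove Theorem~\ref{thm:mainct}: it is imported verbatim (as a ``minor modification'') from Theorem~1 of the cited earlier work, and the appendix only proves the results that build on it. So there is no in-paper proof to compare against; what can be said is that your second, ``Dini-free'' route --- transforming to $z=V^{-1}x$, applying variation of constants on each maximal interval where $\sw\equiv i$, using $|e^{\Lambda_i\tau}|\cle e^{M_i\tau}\cle e^{\Lambda\tau}$ for $\tau\ge 0$, and concatenating to get $|z(t)|\cle e^{\Lambda t}|z(0)|+\int_0^t e^{\Lambda(t-s)}\rb\,ds$ --- is essentially the argument of the cited reference, while your first route via the componentwise Dini differential inequality $D^+|z_j|\le[\Lambda|z|+\rb]_j$ and the Kamke--M\"uller comparison principle for the cooperative system $\dot u=\Lambda u+\rb$ is an equally valid alternative. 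All the individual estimates check out: the diagonal term uses $\real\{[\Lambda_i]_{jj}\}=[M_i]_{jj}\le[\Lambda]_{jj}$, the off-diagonal terms use $|[\Lambda_i]_{jk}|=[M_i]_{jk}\le[\Lambda]_{jk}\ge 0$, the perturbation term uses \eqref{eq:7a} and \eqref{eq:9a}, the passage from $u(0)+\Lambda^{-1}\rb$ to $\eta$ correctly exploits $e^{\Lambda t}\cge 0$ from Lemma~\ref{lem:propMetzler}(a), and the ultimate bound follows from $e^{\Lambda t}\to 0$. You are also right to flag the regularity issue at switching instants as the only delicate point; it is handled either by reading the inequality in the Dini sense on the complement of the switching set or by avoiding it altogether with the interval-by-interval argument.
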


\begin{rem}
  \label{rem:opt}
  The main assumption that enables the application of
  Theorem~\ref{thm:mainct} is the obtention of an invertible matrix
  $V$ so that $\Lambda$ in (\ref{eq:88}) be Hurwitz. In
  \cite{haiser_auto10}, an algorithm to seek such a matrix was
  provided. This algorithm searches over unitary matrices
  $V$. However, it may happen that even if a matrix $V$ that makes
  $\Lambda$ Hurwitz exists, no unitary matrix $V$ ensuring such a
  condition exists. A general algorithm to seek the required matrix
  $V$ is the following. Let $\ab(\Lambda)$ denote the spectral
  abscissa of $\Lambda$, i.e. the maximum over the real parts of the
  eigenvalues of $\Lambda$. We pose the following optimization
  problem:
  \begin{quote}
    Minimize $\ab(\Lambda)$ over $V\in\C^{n\times n}$ invertible.
  \end{quote}
  It is not necessary to find a global optimum of this nonconvex
  optimization problem: it suffices to find an invertible $V$ such
  that $\ab(\Lambda) < 0$, i.e. such that $\Lambda$ is Hurwitz. Note
  that for every nonzero scalar $\alpha\in\C$, according to
  (\ref{eq:88}) the matrices $V$ and $\alpha V$ will produce the same
  $\Lambda_i$ and hence the same $\Lambda$. Consequently, when
  searching for a suitable $V$ according to the above optimization,
  the entries of $V$ can be bounded a priori without affecting the
  success of the search.
\end{rem}
\begin{rem}
  \label{rem:shape}
  A region of the form $\{x \in \R^n : |V^{-1} x| \preceq \bar z\}$,
  with $\bar z \succeq 0$ as given by (\ref{eq:80}) and
  (\ref{eq:29a}), has polyhedral shape if the entries of $V$ are real,
  and a combined ellipsoidal/polyhedral shape if $V$ has some complex
  entries (see \cite{haikof_iwc08} for more details). Every
  (componentwise) bound $|V^{-1} x| \preceq \bar z$ yields a
  corresponding componentwise bound $|x| \preceq |V| \bar z$, since
  \begin{equation}
    \label{eq:12a}
    |x| = |VV^{-1}x| \preceq |V| |V^{-1}x| \preceq |V| \bar z.
  \end{equation}
\end{rem}

\subsection{Relationship to CQLF}
\label{sec:relationship-cqlf}

The following result establishes the relationship between the
existence of the matrix $V$ required by Theorem~\ref{thm:mainct} and
the existence of a quadratic Lyapunov function. A similar result has
been reported in \cite{mormor_cdc01}, where the class of systems for
which the matrix $V$ required by Theorem~\ref{thm:mainct} exists was
identified as a subclass of the switching systems that admit a
CQLF. The result in \cite{mormor_cdc01} was stated without proof, nor
reference to another publication containing the proof. Here we provide
a proof and, moreover, will present an extension
[Theorem~\ref{thm:UBlinbnd}(\ref{item:11}) in
Section~\ref{sec:lin-pert-bnd}] where sufficient conditions for the
existence of a CQLF guaranteeing practical stability are given for the
case of perturbations bounded by an affine function of the
(non-delayed) state.
\begin{theorem}
  \label{thm:cdlf}
  Let $\Lambda\in\R^{n\times n}$ be Metzler and let $\bar\Lambda$ be
  Hurwitz and satisfy $\bar\Lambda \cge \Lambda$. Then,
  \begin{enumerate}[a)]
  \item there exists a diagonal and positive definite matrix
    $D=\diag(d_1,\ldots,d_n)>0$ satisfying
    \begin{equation}
      \label{eq:2}
      \bar\Lambda'D+D\bar\Lambda < 0;
    \end{equation}\label{item:1}
  \item $\Lambda$ is Hurwitz.\label{item:2}
  \item If $\Lambda$ satisfies (\ref{eq:88}) for some invertible
    $V\in\C^{n\times n}$ and matrices $A_i\in\R^{n\times n}$, then for
    each $D$ as in \ref{item:1}), the corresponding real symmetric and
    positive definite matrix $P=\real\{(V^{-1})^* D V^{-1}\}$
    satisfies
    \begin{equation}
      \label{eq:1}
      A_i'P + P A_i < 0,\qquad \text{for all }i\in\Ind.
    \end{equation}\label{item:3}
  \end{enumerate}
\end{theorem}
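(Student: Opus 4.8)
The plan is to prove the three parts more or less in sequence, with part (a) doing the heavy lifting. For part (a), the claim is that a Hurwitz Metzler matrix $\bar\Lambda$ admits a diagonal Lyapunov solution. This is a classical fact (it is essentially one of the equivalent characterizations of $-\bar\Lambda$ being an M-matrix, cf.\ Lemma~\ref{lem:propMetzler}(c) and the M-matrix theory in \cite{berple_book94}). I would prove it directly rather than quote it, so that the construction can be reused later. The idea: since $\bar\Lambda$ is Metzler and Hurwitz, by Lemma~\ref{lem:propMetzler}(b) we have $-\bar\Lambda^{-1}\succeq 0$; moreover one can check $-\bar\Lambda^{-1}$ has positive diagonal and, because $\bar\Lambda$ is irreducible or by a small perturbation/continuity argument, there is a strictly positive vector $v\succ 0$ with $\bar\Lambda v\prec 0$ (take $v = -\bar\Lambda^{-1}\mathbf{1}$ plus a perturbation if needed to guarantee strict positivity of all components and strict negativity of $\bar\Lambda v$). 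Then set $d_i$ appropriately — the standard trick is to choose $D$ so that $\bar\Lambda'D + D\bar\Lambda$ is strictly diagonally dominant with negative diagonal, hence negative definite. Concretely, with $u = D v \succ 0$ one can arrange row sums of $(\bar\Lambda' D + D\bar\Lambda)$ to be negative; I expect to use the existence of $v\succ0$ with $\bar\Lambda v \prec 0$ to pick $D = \diag(1/v_1,\dots,1/v_n)$ (or its square) and verify diagonal dominance using that the off-diagonal entries of $\bar\Lambda$ are nonnegative.

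Part (b) is then immediate: $\bar\Lambda \cge \Lambda$ with $\bar\Lambda$ Metzler forces $\Lambda$ Metzler as well (lowering off-diagonal entries keeps them nonnegative only if they stay $\ge$ the lower bound — actually $\Lambda \cle \bar\Lambda$ and $\Lambda$ Metzler is given in the hypothesis), and for Metzler matrices the comparison $\Lambda \cle \bar\Lambda$ with $\bar\Lambda$ Hurwitz implies $\Lambda$ Hurwitz. The cleanest route: by Lemma~\ref{lem:propMetzler}(a), $e^{\Lambda t}\succeq 0$ and $e^{\bar\Lambda t}\succeq 0$, and a standard comparison argument (or the M-matrix monotonicity: if $-\bar\Lambda$ is an M-matrix and $-\Lambda \cge -\bar\Lambda$ entrywise on the diagonal while $\cle$ off-diagonal... ) gives $\srad$-type domination; alternatively use that the Lyapunov inequality $\bar\Lambda'D + D\bar\Lambda < 0$ from part (a) together with $D>0$ diagonal and $\Lambda$ Metzler with $\Lambda \cle \bar\Lambda$ yields, via Lemma~\ref{lem:propMetzler}(d) applied to the symmetric Metzler matrix $\Lambda'D + D\Lambda$, that $x'(\Lambda'D+D\Lambda)x \le |x|'(\Lambda'D+D\Lambda)|x| \le |x|'(\bar\Lambda'D+D\bar\Lambda)|x| < 0$ for $x\neq 0$ — so the \emph{same} $D$ works for $\Lambda$, which simultaneously proves (b) and sets up (c).

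Part (c) is the payoff and should follow by pushing the Lyapunov inequality for $\Lambda$ through the similarity transform $V$. Given $\Lambda = \max_i M_i$ with $M_i = \Met(V^{-1}A_iV)$, we have $M_i \cle \Lambda$ and each $M_i$ is Metzler and symmetric... wait, $M_i$ need not be symmetric, but $\Met$ of anything is Metzler, and $\Lambda$ being the componentwise max of Metzler matrices is Metzler. Using part (a)'s $D$ for $\bar\Lambda$ (here $\bar\Lambda = \Lambda$ itself is Hurwitz by (b)), we get $\Lambda'D + D\Lambda < 0$. Now I would argue: for $z\in\C^n$, write $\Lambda_i = V^{-1}A_iV$; then $z^*(\Lambda_i^* D + D\Lambda_i)z \le |z|'(\Met(\Lambda_i^* D + D\Lambda_i))|z|$ by Lemma~\ref{lem:propMetzler}(e), and since $D$ is real diagonal positive, $\Met(\Lambda_i^*D + D\Lambda_i) \cle M_i'D + DM_i \cle \Lambda'D + D\Lambda < 0$ in the sense of the associated quadratic forms on $|z|\succeq 0$. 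Hence $\Lambda_i^*D + D\Lambda_i < 0$, i.e.\ $(V^{-1}A_iV)^*D + D(V^{-1}A_iV) < 0$. Multiplying on the left by $(V^{-1})^*$... rather, substituting $z = V^{-1}x$ for real $x$: $x^*(V^{-1})^*(\Lambda_i^*D + D\Lambda_i)V^{-1}x < 0$ becomes $x^*(A_i^*(V^{-1})^*DV^{-1} + (V^{-1})^*DV^{-1}A_i)x < 0$; taking real parts and using $x$ real gives $x'(A_i'P + PA_i)x < 0$ with $P = \real\{(V^{-1})^*DV^{-1}\}$, which is real symmetric; positive definiteness of $P$ follows since $x'Px = \real\{x^*(V^{-1})^*DV^{-1}x\} = (V^{-1}x)^*D(V^{-1}x) > 0$ for $x\neq 0$ as $V$ is invertible and $D>0$.

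The main obstacle I anticipate is part (a) — specifically, justifying the existence of a strictly positive $v$ with $\bar\Lambda v \prec 0$ and then converting that into a \emph{strict} diagonal-dominance certificate for $\bar\Lambda'D + D\bar\Lambda$ without hand-waving. The subtlety is that $-\bar\Lambda^{-1}\mathbf{1}$ is only guaranteed $\succeq 0$, not $\succ 0$, when $\bar\Lambda$ is reducible; the standard fix is a continuity/perturbation argument (replace $\bar\Lambda$ by $\bar\Lambda + \varepsilon E$ for small $\varepsilon>0$ and a suitable nonnegative irreducible $E$, or note $-\bar\Lambda^{-1}$ maps some strictly positive vector to a strictly positive vector by adding a small positive multiple of $\mathbf 1$), after which the diagonal-dominance computation is routine. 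Everything downstream (parts b and c) is then bookkeeping with the Metzler inequalities of Lemma~\ref{lem:propMetzler}.
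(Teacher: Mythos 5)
Your proposal follows essentially the same route as the paper: parts (b) and (c) are handled exactly as in the paper's proof, by applying Lemma~\ref{lem:propMetzler}(d) and (e) to the symmetric Metzler matrices $\Lambda'D+D\Lambda$ and $\Met(\Lambda_i^*D+D\Lambda_i)$, chaining the entrywise inequalities $\Met(\Lambda_i^*D+D\Lambda_i)\cle M_i'D+DM_i\cle \Lambda'D+D\Lambda\cle\bar\Lambda'D+D\bar\Lambda$, and pushing the result through $V$. The only divergence is in part (a), which the paper disposes of by citing the diagonal stability of Metzler Hurwitz matrices from Berman--Plemmons; if you insist on constructing $D$ by hand, be aware that $D=\diag(1/v_1,\ldots,1/v_n)$ with $\bar\Lambda v\prec 0$ is not sufficient on its own, since $(\bar\Lambda'D+D\bar\Lambda)v=\bar\Lambda'\mathbf{1}+D\bar\Lambda v$ and the column-sum term $\bar\Lambda'\mathbf{1}$ is uncontrolled --- you also need a left vector $u\succ 0$ with $u'\bar\Lambda\prec 0$ and the choice $D=\diag(u_1/v_1,\ldots,u_n/v_n)$, which yields $(\bar\Lambda'D+D\bar\Lambda)v\prec 0$ and hence negative definiteness of this symmetric Metzler matrix.
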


The following consequence of Theorem~\ref{thm:cdlf} constitutes an
important fact regarding Metzler and Hurwitz matrices and the
operation (\ref{eq:31a}).
\begin{cor}
  \label{cor:lamHur}
  Consider the switching system (\ref{eq:system}), let
  $V\in\C^{n\times n}$ be invertible, and define $\Lambda_i$ and $\Lambda$
  as in (\ref{eq:88}), where $\Met(\cdot)$ is the operation defined in
  (\ref{eq:31a}). If $\Lambda$ is Hurwitz, then $A_i$ is Hurwitz for
  all $i\in\Ind$. Moreover, the $A_i$ admit a common quadratic
  Lyapunov function.
\end{cor}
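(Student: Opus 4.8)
The plan is to read off the Corollary as the special case $\bar\Lambda=\Lambda$ of Theorem~\ref{thm:cdlf}. First I would verify that $\Lambda$ is Metzler: by the remark immediately following~\eqref{eq:31a}, each $M_i=\Met(\Lambda_i)$ is Metzler, and $\Lambda=\max_{i\in\Ind}M_i$ inherits this property, since every off-diagonal entry of $\Lambda$ is the maximum of finitely many nonnegative numbers and hence nonnegative. Since $\Lambda\cge\Lambda$ holds trivially and $\Lambda$ is Hurwitz by hypothesis, the pair consisting of the Metzler matrix $\Lambda$ and the Hurwitz dominator $\bar\Lambda=\Lambda$ satisfies the standing assumptions of Theorem~\ref{thm:cdlf}.

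Applying Theorem~\ref{thm:cdlf} with this choice, part~\ref{item:1}) produces a diagonal positive definite matrix $D=\diag(d_1,\dots,d_n)$ with $\Lambda'D+D\Lambda<0$. Because $\Lambda$ is, by its very definition in~\eqref{eq:88}, of the form required by part~\ref{item:3}) for the given invertible $V$ and the system matrices $A_i$, that part applies verbatim and yields the real symmetric positive definite matrix $P=\real\{(V^{-1})^*DV^{-1}\}$ satisfying $A_i'P+PA_i<0$ for every $i\in\Ind$. Hence $x\mapsto x'Px$ is a common quadratic Lyapunov function for $\{A_i:i\in\Ind\}$, which is the second assertion; and each inequality $A_i'P+PA_i<0$ with $P>0$ forces $A_i$ to be Hurwitz by the classical Lyapunov theorem, which is the first assertion.

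There is essentially no obstacle here: all the substance is already contained in Theorem~\ref{thm:cdlf}, and the Corollary amounts to recognizing that $\Lambda$ may serve simultaneously as the dominated Metzler matrix and as its own Hurwitz dominator. The only points that require a moment's attention are the verification that componentwise maxima preserve the Metzler property and the bookkeeping that $\Lambda$ indeed matches the structural hypothesis of part~\ref{item:3}). One could alternatively deduce the Hurwitz property of each $A_i$ from part~\ref{item:2}) applied to $M_i\cle\Lambda$, combined with the standard comparison inequality $\ab(\Lambda_i)\le\ab(\Met(\Lambda_i))$ and the similarity of $A_i$ to $\Lambda_i$; but routing the argument through the common Lyapunov matrix $P$ is cleaner and simply reuses what part~\ref{item:3}) has already delivered.
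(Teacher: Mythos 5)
Your proposal is correct and follows exactly the paper's route: the paper's own proof is the one-liner ``apply Theorem~\ref{thm:cdlf}\ref{item:3}) with $\bar\Lambda=\Lambda$,'' and you have simply spelled out the (easy) verifications that $\Lambda$ is Metzler and that it matches the structural hypothesis of part~\ref{item:3}). Nothing is missing.
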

\begin{proof}
  Just apply Theorem~\ref{thm:cdlf}\ref{item:3}) with $\bar\Lambda = \Lambda$.
\end{proof}

The above theorem and corollary establish that the class of switching
systems considered in this paper, that is, those for which the matrix
$V$ required by Theorem~\ref{thm:mainct} exists, admit a common
quadratic Lyapunov function. This closes a problem left open in our
previous paper \cite{haiser_auto10}. As shown previously in
\cite{haiser_auto10} and \cite{mormor_cdc01}, the class of switching
systems considered in the present paper \emph{contains} the class of
systems that can be simultaneously triangularized by means of a common
transformation. Moreover, the class of switching systems considered is
not a trivial extension of the class of switching systems admitting
simultaneous triangularization.  To illustrate this point, we revisit
the example presented in \cite{DaM99} consisting of
system~\eqref{eq:system} with no disturbance, $\sw(t) \in \{1,2\}$ and
  \[ A_1=
  \begin{bmatrix}
    -1 &-1 \\1 & -1
  \end{bmatrix} , \quad
A_2=
\begin{bmatrix}
  -1 & -a \\1/a  & -1
\end{bmatrix}.
\] %
Note that for every value of $a$, the eigenvalues of $A_2$ are
$-1\pm\mathbf{i}$, identical to those of $A_1$, and hence both $A_1$
and $A_2$ are Hurwitz. However, the eigenvectors of $A_1$ are $[1,\
\pm\mathbf{i}]'$ and those of $A_2$ are $[1,\ \pm a \mathbf{i}]'$. In
order to be simultaneously triangularizable, it is necessary that both
$A_1$ and $A_2$ have a common eigenvector. Consequently, loosely
speaking we may say that this switching system is farther away from
simultaneous triangularization as $a$ is varied farther away from
1. It was shown in \cite{DaM99} that for $a > 3 +\sqrt{8}$ the above
switching system does not admit a CQLF. For $a=3 +\sqrt{8} - 10^{-3}$,
which corresponds to a switching system with stable subsystems but so
far from simultaneous triangularization that it is at the verge of not
admitting a CQLF, searching for a unitary $V$ by means of the
algorithm in \cite{haiser_auto10} yields a solution for which
$\Lambda$ is not Hurwitz. However, searching for an arbitrary $V$ by
means of the optimization proposed in Remark~\ref{rem:opt}, we are
able to obtain the feasible solution
\[ V=  \left[
    \begin{smallmatrix}
      -6.0069   & 5.5729 \\
   -0.3554   &-1.0843
  \end{smallmatrix}\right] +
    \left[
      \begin{smallmatrix}
        0.8605  & -2.6151\\
        -2.4885  & -2.3081
     \end{smallmatrix}\right]\mathbf{i},
\]
for which the corresponding $\Lambda$ is Hurwitz.

{In addition, the class of switching systems considered in this paper
  is \emph{strictly} contained in the class of switching linear
  systems that admit a CQLF, i.e., some switching systems may admit a
  CQLF but the matrix $V$ required by Theorem~\ref{thm:mainct} may not
  exist. To see this, consider Example 4.1 of \cite{shonar_acc00},
  which consists of system~\eqref{eq:system} with no disturbance,
  $\sw(t) \in \{1,2,3\}$ and
  \[ A_1=
  \begin{bmatrix}
    0 &5\\ -30 & -1.4
  \end{bmatrix},\, 
  A_2 =
  \begin{bmatrix}
    0 & 5\\ -26 & -1
  \end{bmatrix},\, 
  A_3 =
  \begin{bmatrix}
    -6 & 27\\ -150 & -1
  \end{bmatrix} \]
  This switching system admits a CQLF but the search for $V$ outlined
  in Remark~\ref{rem:opt} does not give a useful solution, even when
  the optimization is run over 1000 times from different arbitrary
  initial conditions.
}

\subsection{Nonlinear perturbation bounds}
\label{sec:nonlin-pert-bnd}

Theorem~\ref{thm:nlpertct} below establishes local transient and
ultimate bounds for system (\ref{eq:system}) with perturbation bounds
of the form (\ref{eq:boundw})--(\ref{eq:barx}). The theorem is
followed by the derivation of invariant regions
(Corollary~\ref{cor:nlinv}) and of conditions for the bounds to be of
semi-global nature (Corollary~\ref{cor:gub}).

\begin{theorem}
  \label{thm:nlpertct}
  Consider the switching system (\ref{eq:system}) with perturbation
  bound of the form (\ref{eq:boundw})--(\ref{eq:barx}), where the
  bounding functions $\delta_i$ are CNI. Let $V\in\C^{n\times n}$ be
  invertible and define $\Lambda_i$ for $i\in\Ind$ and $\Lambda$ as in
  (\ref{eq:88}), where $\Met(\cdot)$ is the operation defined in
  (\ref{eq:31a}). Suppose that $\Lambda$ is Hurwitz. Let $\psi :
  \R^n_{+0} \to \R^n_{+0}$ be defined as in (\ref{eq:40}), let $\delta
  : \R^n_{+0} \to \R^n_{+0}$ be continuous, CNI and satisfy
  (\ref{eq:6}), and for every $\gamma \in \R^n_{+0}$ consider
  $T_\gamma : \R^n_{+0} \to \R^n_{+0}$ defined in (\ref{eq:19}).
  \begin{align}
    \label{eq:40}
    \psi(x) &= \max_{i\in\Ind} \left[ \max_{|w_i| \preceq \delta_i( |V| x )} |V^{-1} H_i w_i| \right],\\
    \label{eq:6}
    \delta(x) &\succeq \psi(x),\quad \text{for all }x \in \R^n_{+0},\\
    \label{eq:19}
    T_\gamma(x) &= -\Lambda^{-1} \delta(x) + \gamma.
  \end{align}
  Suppose that there exists $\beta \in \R^n_{+0}$ satisfying
  $T_0(\beta) \prec \beta$. Then,
  \begin{enumerate}[(a)]
  \item For every $k\in\integs_{+}$, $T_0^{k+1}(\beta) \preceq
    T_0^k(\beta)$ and $\lim_{k\to\infty} T_0^k(\beta) = b \succeq 0$.\label{item:5a}
  \item Transient bounds. For every $\gamma \in \R^n_{+0}$ such that
    $-\Lambda^{-1} [\delta(\beta) + \max\{- \Lambda \gamma, 0\}] \prec
    \beta$, it happens that if $|V^{-1} x(t)| \preceq T_\gamma(\beta)$
    for all $-\bar\tau \le t \le 0$, then $|V^{-1} x(t)| \preceq
    \beta$ for all $t\ge -\bar\tau$.\label{item:6}
  \item Selection of $\gamma\in\R^n_{+}$ for transient bounds. For
    every positive vector $c\in\R^n_{+}$, let $p(c)$ denote the vector
    in $\R^n_{+0}$ whose components satisfy
    \begin{equation}
      \label{eq:68}
      [p(c)]_j =
      \begin{cases}
        (-\Lambda c)_j &\text{if }(-\Lambda c)_j > 0,\\
        0 &\text{if }(-\Lambda c)_j \le 0,
      \end{cases}
    \end{equation}
    for $j=1,\ldots,n$. Then, $p(c) \neq 0$ and for every $\epsilon$ satisfying
    $0 < \epsilon < \bar\epsilon$, where
    \begin{equation}
      \label{eq:67}
      \bar\epsilon \dfn \min_{j:[-\Lambda^{-1}p(c)]_j\neq 0}
      \frac{[\beta+\Lambda^{-1}\delta(\beta)]_j}{[-\Lambda^{-1}p(c)]_j}
      > 0,
    \end{equation}
    it happens that $-\Lambda^{-1}[\delta(\beta) + \max\{-\Lambda
    c\epsilon,0\}] \prec \beta$.\label{item:9}
  \item Ultimate bounds. If $|V^{-1} x(t)| \preceq \beta$ for all $t
    \ge -\bar\tau$, then $\limsup_{t\to\infty} |V^{-1} x(t)| \preceq
    b$.\label{item:7}
  \end{enumerate}
\end{theorem}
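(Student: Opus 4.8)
The plan is to reduce the delay-differential dynamics to a componentwise differential inequality governed by $\Lambda$ and then analyse it by comparison with explicit Metzler-system solutions. Write $y(t)\dfn V^{-1}x(t)$, $\xi(t)\dfn|y(t)|$ and $\bar\xi(t)\dfn\max_{t-\bar\tau\le\tau\le t}\xi(\tau)$. From \eqref{eq:system}, $\dot y=\Lambda_{\sw(t)}y+g(t)$ with $g(t)\dfn V^{-1}H_{\sw(t)}w_{\sw(t)}(t)$. Since $|x(\tau)|\cle|V|\,\xi(\tau)$ by \eqref{eq:12a} and $|V|\cge0$, one gets $\theta(t)\cle|V|\,\bar\xi(t)$; combining this with the CNI property of each $\delta_i$, the bound \eqref{eq:boundw}, and definitions \eqref{eq:40}, \eqref{eq:6} gives $|g(t)|\cle\psi(\bar\xi(t))\cle\delta(\bar\xi(t))$. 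A componentwise Dini-derivative estimate — bounding $\real\{\overline{y_j}(\Lambda_{\sw}y)_j\}$ by $|y_j|\,(M_{\sw}|y|)_j$ through definition \eqref{eq:31a} of $\Met(\cdot)$, treating indices with $y_j=0$ directly, and using $M_{\sw}\cle\Lambda$ with $\xi\cge0$ — then yields the master inequality
\[
 D^+\xi(t)\cle\Lambda\,\xi(t)+\delta(\bar\xi(t)),\qquad \xi(t)=|V^{-1}x(t)|.
\]
Recall $\Lambda$ is Metzler, so $e^{\Lambda t}\cge0$ by Lemma~\ref{lem:propMetzler}a), and Hurwitz, so $-\Lambda^{-1}\cge0$ by Lemma~\ref{lem:propMetzler}b) and $e^{\Lambda t}\to0$.

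For (a), the map $T_0(x)=-\Lambda^{-1}\delta(x)$ is CNI (the CNI map $\delta$ composed with multiplication by $-\Lambda^{-1}\cge0$) and sends $\R^n_{+0}$ into itself; iterating $T_0$ on $T_0(\beta)\cle\beta$ gives $T_0^{k+1}(\beta)\cle T_0^k(\beta)$, so the sequence is componentwise nonincreasing and bounded below by $0$, hence converges to some $b\cge0$. For (c): if $p(c)=0$ then $-\Lambda c\cle0$, whence $c=(-\Lambda^{-1})(-\Lambda c)\cle0$, contradicting $c\succ0$; thus $p(c)\ne0$ and also $-\Lambda^{-1}p(c)\ne0$. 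Each numerator in \eqref{eq:67} equals $[\beta+\Lambda^{-1}\delta(\beta)]_j=[\beta-T_0(\beta)]_j>0$ and each denominator is positive, so $\bar\epsilon>0$. Finally $\max\{-\Lambda c\epsilon,0\}=\epsilon\,p(c)$ for $\epsilon>0$, so the asserted inequality $-\Lambda^{-1}[\delta(\beta)+\epsilon\,p(c)]\prec\beta$ is $\epsilon(-\Lambda^{-1}p(c))\prec\beta-T_0(\beta)$, which holds componentwise exactly when $\epsilon<\bar\epsilon$.

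Part (b) is the main step. Assume $\xi(t)\cle T_\gamma(\beta)$ on $[-\bar\tau,0]$. Using $\max\{-\Lambda\gamma,0\}\cge-\Lambda\gamma$ and $-\Lambda^{-1}\cge0$ one has $T_\gamma(\beta)=-\Lambda^{-1}\delta(\beta)+\gamma\cle-\Lambda^{-1}[\delta(\beta)+\max\{-\Lambda\gamma,0\}]\prec\beta$, so $\xi\prec\beta$ on $[-\bar\tau,0]$. Let $t^\ast\dfn\sup\{t:\xi(\tau)\prec\beta\ \forall\,\tau\in[-\bar\tau,t]\}$; if $t^\ast=\infty$ we are done, so suppose $t^\ast<\infty$, so that $t^\ast>0$, $\xi(\tau)\prec\beta$ for $\tau<t^\ast$, $\xi(t^\ast)\cle\beta$ and $\xi_j(t^\ast)=\beta_j$ for some $j$. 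On $[0,t^\ast]$ we have $\bar\xi(t)\cle\beta$, so the master inequality reduces to $D^+\xi(t)\cle\Lambda\xi(t)+\delta(\beta)$. Set $w_0\dfn-\Lambda^{-1}[\delta(\beta)+\max\{-\Lambda\gamma,0\}]$ and let $w$ solve $\dot w=\Lambda w+\delta(\beta)$, $w(0)=w_0$; then $\dot w(t)=e^{\Lambda t}\dot w(0)=-e^{\Lambda t}\max\{-\Lambda\gamma,0\}\cle0$, so $w$ is componentwise nonincreasing and $w(t)\cle w_0\prec\beta$ for all $t\ge0$. Since also $w_0\cge-\Lambda^{-1}\delta(\beta)+\gamma=T_\gamma(\beta)\cge\xi(0)$, the standard comparison theorem for the quasimonotone (Metzler) field $\xi\mapsto\Lambda\xi+\delta(\beta)$ gives $\xi(t)\cle w(t)\prec\beta$ on $[0,t^\ast]$, contradicting $\xi_j(t^\ast)=\beta_j$. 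Hence $\xi(t)\cle\beta$ for all $t\ge-\bar\tau$.

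For (d), start from $\xi(t)\cle\beta=T_0^0(\beta)$ on $[-\bar\tau,\infty)$ and prove by induction that $\limsup_{t\to\infty}\xi(t)\cle T_0^k(\beta)$. Given this for $k$, pick any vector $\varepsilon\succ0$; there is $t_\varepsilon$ with $\bar\xi(t)\cle T_0^k(\beta)+\varepsilon$ for $t\ge t_\varepsilon$, so $D^+\xi(t)\cle\Lambda\xi(t)+\delta(T_0^k(\beta)+\varepsilon)$ there; comparing with the solution of $\dot w=\Lambda w+\delta(T_0^k(\beta)+\varepsilon)$ and using $e^{\Lambda t}\to0$ gives $\limsup_{t\to\infty}\xi(t)\cle-\Lambda^{-1}\delta(T_0^k(\beta)+\varepsilon)=T_0(T_0^k(\beta)+\varepsilon)$. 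Letting $\varepsilon\to0$ and using continuity of $\delta$ yields $\limsup_{t\to\infty}\xi(t)\cle T_0^{k+1}(\beta)$, and $k\to\infty$ gives $\limsup_{t\to\infty}\xi(t)\cle b$. I expect the two nontrivial points to be the derivation of the master inequality (routine but lengthy, combining the $\Met(\cdot)$ estimate, the bound $\theta(t)\cle|V|\bar\xi(t)$, and $M_{\sw}\cle\Lambda$) and, above all, the choice of comparison trajectory in (b): the crucial observations are that $\dot w(t)=e^{\Lambda t}\dot w(0)$ forces $w$ to be nonincreasing once $\dot w(0)\cle0$, and that the hypothesis on $\gamma$ is precisely what makes $w_0=-\Lambda^{-1}[\delta(\beta)+\max\{-\Lambda\gamma,0\}]$ simultaneously dominate $\xi(0)$, lie strictly below $\beta$, and satisfy $\dot w(0)\cle0$.
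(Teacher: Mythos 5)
Your proposal is correct, but it takes a genuinely different route from the paper's. The paper never writes down a differential inequality: it treats Theorem~\ref{thm:mainct} as a black box and applies it on successive time windows, freezing the state-dependent perturbation bound to a constant on each window (for part~(\ref{item:6}), $\bw_i=\delta_i(|V|\beta)$ and $\rb=\delta(\beta)+\max\{-\Lambda\gamma,0\}$, chosen precisely so that $\eta=0$ in \eqref{eq:86} and the transient bound \eqref{eq:80} collapses to the constant $-\Lambda^{-1}\rb\prec\beta$ up to the critical time $t_c$; for part~(\ref{item:7}), an induction with $\bw_i^k=\delta_i(|V|T_\gamma^k(\beta))$ restarting Theorem~\ref{thm:mainct} at each stage, followed by Lemma~\ref{lem:CNI}(\ref{item:CNIb}) to shrink $\gamma$ and push $T_\gamma^k(\beta)$ below $b+\epsilon$). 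You instead re-derive the componentwise inequality $D^+|V^{-1}x|\cle\Lambda|V^{-1}x|+\delta(\bar\xi)$ --- which is exactly the engine inside the cited proof of Theorem~\ref{thm:mainct} --- and run quasimonotone comparison arguments on it directly; your trajectory $w$ with $w(0)=-\Lambda^{-1}[\delta(\beta)+\max\{-\Lambda\gamma,0\}]$ and $\dot w(0)=-\max\{-\Lambda\gamma,0\}\cle 0$ is the same mechanism as the paper's $\eta=0$ in a different guise, and your $\varepsilon$-inflation plus continuity of $\delta$ in part~(d) replaces the paper's $T_\gamma$-bookkeeping and Lemma~\ref{lem:CNI}(\ref{item:CNIb}). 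What each buys: the paper's version is shorter given that Theorem~\ref{thm:mainct} is already in hand and handles the delay uniformly through the constant-bound theorem; yours is self-contained and makes the comparison-principle mechanics explicit, at the cost of having to establish the ``master inequality'' whose detailed proof (Dini derivatives at $y_j=0$, the $\Met(\cdot)$ estimate) you only sketch --- that sketch is correct, but it is precisely the content you would otherwise import from the reference. Parts~(a) and~(c) coincide with the paper's arguments.
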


In addition to the obtention of $V$ such that
$\Lambda$ is Hurwitz, whose computation is explained in
Remark~\ref{rem:opt}, Theorem~\ref{thm:nlpertct} requires a
nonnegative vector $\beta$ satisfying $T_0(\beta) \prec \beta$. If
such a vector exists, then it can be computed by means of
Algorithm~1 and Theorem~3 of \cite{kofhai_ijc07}.

Theorem~\ref{thm:nlpertct}(\ref{item:5a}) establishes a monotonicity
property of the sequence of vectors obtained by iterating the map
$T_0$ on the vector $\beta$. This property is useful to ensure the
existence of the limiting vector $b$, which constitutes the smallest
componentwise ultimate bound that can be obtained for $|V^{-1} x(t)|$
by direct application of this theorem for the given vector $\beta$
[Theorem~\ref{thm:nlpertct}(\ref{item:7})].

Theorem~\ref{thm:nlpertct}(\ref{item:6}) provides bounds for each of the
components of $|V^{-1} x(t)|$ that are valid at every time instant,
provided the initial condition $|V^{-1} x(t)|$, $-\bar\tau \le t \le
0$, is bounded by $T_\gamma(\beta)$. For the bounds provided by
Theorem~\ref{thm:nlpertct}(\ref{item:6}) to be valid, the existence of
$\gamma \in \R^n_{+0}$ so that $-\Lambda^{-1} [\delta(\beta) + \max\{-
\Lambda \gamma, 0\}] \prec \beta$ is required. Note that substituting
$0$ for $\gamma$ into the latter condition, and recalling
(\ref{eq:19}), yields $T_0(\beta) \prec \beta$, which holds by
assumption. Therefore, such condition always holds for $\gamma=0$, and
by continuity, it will also hold for every $\gamma \in \R^n_{+0}$ with
small enough components. The advantage of employing $\gamma$ with
greater components is a larger set of initial conditions for which the
bound given by Theorem~\ref{thm:nlpertct}(\ref{item:6}) is valid. 

Theorem~\ref{thm:nlpertct}(\ref{item:9}) shows how the aforementioned
vector $\gamma$ can be computed so that all of its components are not
only nonnegative but also positive. Specifically,
Theorem~\ref{thm:nlpertct}(\ref{item:9}) establishes that if an
arbitrary positive vector $c$ is selected, $\gamma=\epsilon c$ will
satisfy the requirement in Theorem~\ref{thm:nlpertct}(\ref{item:6})
for every positive scalar $\epsilon$ satisfying
$\epsilon<\bar\epsilon$ with $\bar\epsilon$ as in (\ref{eq:67}). Note
that there is ample leeway in the selection of $\gamma$, since the
vector $c$ is positive but otherwise arbitrary.

Theorem~\ref{thm:nlpertct}(\ref{item:7}) provides componentwise
ultimate bounds whenever the state remains within the bound given by
Theorem~\ref{thm:nlpertct}(\ref{item:6}) at all times. The combination
of parts~(\ref{item:6}) and (\ref{item:7}) of
Theorem~\ref{thm:nlpertct} gives local ultimate bounds, i.e., ultimate
bounds that are guaranteed to hold for initial conditions within a
certain set.

\begin{cor}[Invariance]
  \label{cor:nlinv}
  In addition to the hypotheses of Theorem~\ref{thm:nlpertct}, suppose
  that for every $\epsilon \in \R^n_+$, there exists $\beta_\epsilon$
  such that $b \preceq \beta_\epsilon \preceq b + \epsilon$, and
  $T_0(\beta_\epsilon) \prec \beta_\epsilon$. Then, if $|V^{-1} x(t)|
  \preceq b$ for all $-\bar\tau \le t \le 0$, then $|V^{-1} x(t)|
  \preceq b$ for all $t \ge -\bar\tau$.
\end{cor}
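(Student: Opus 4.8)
The plan is to apply Theorem~\ref{thm:nlpertct}(\ref{item:6}) once for each of the auxiliary vectors $\beta_\epsilon$ supplied by the added hypothesis, and then to let $\epsilon$ shrink to zero. Two preparatory observations are needed. First, since $\delta$ is continuous the map $T_0$ is continuous, so from Theorem~\ref{thm:nlpertct}(\ref{item:5a}) the limit $b=\lim_{k\to\infty}T_0^k(\beta)$ satisfies $T_0(b)=\lim_{k\to\infty}T_0^{k+1}(\beta)=b$; that is, $b$ is a fixed point of $T_0$. Second, since $\Lambda$ is Metzler and Hurwitz, Lemma~\ref{lem:propMetzler}b) yields $-\Lambda^{-1}\succeq 0$, so $T_0(x)=-\Lambda^{-1}\delta(x)$ maps $\R^n_{+0}$ into itself and is CNI, being the composition of the CNI map $\delta$ with left multiplication by the nonnegative matrix $-\Lambda^{-1}$.

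Next I would fix an arbitrary $\epsilon\in\R^n_+$ and take $\beta_\epsilon$ as in the hypothesis, so that $b\preceq\beta_\epsilon\preceq b+\epsilon$ and $T_0(\beta_\epsilon)\prec\beta_\epsilon$. From $\beta_\epsilon\succeq b$ and the CNI property of $T_0$ I get $T_0(\beta_\epsilon)\succeq T_0(b)=b$. I would then invoke Theorem~\ref{thm:nlpertct}(\ref{item:6}) with the role of ``$\beta$'' played by $\beta_\epsilon$ (admissible, since $\beta_\epsilon\succeq 0$ and $T_0(\beta_\epsilon)\prec\beta_\epsilon$) and with $\gamma=0$: the side condition $-\Lambda^{-1}[\delta(\beta_\epsilon)+\max\{-\Lambda\cdot 0,0\}]\prec\beta_\epsilon$ is exactly $T_0(\beta_\epsilon)\prec\beta_\epsilon$, and $T_\gamma(\beta_\epsilon)=T_0(\beta_\epsilon)$ when $\gamma=0$. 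Since the hypothesis of the corollary provides $|V^{-1}x(t)|\preceq b\preceq T_0(\beta_\epsilon)$ for all $-\bar\tau\le t\le 0$, Theorem~\ref{thm:nlpertct}(\ref{item:6}) gives $|V^{-1}x(t)|\preceq\beta_\epsilon\preceq b+\epsilon$ for all $t\ge-\bar\tau$.

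Finally, since $\epsilon\in\R^n_+$ was arbitrary, for each fixed $t\ge-\bar\tau$ and each index $j$ the inequality $[\,|V^{-1}x(t)|\,]_j\le b_j+\epsilon_j$ holds for every $\epsilon_j>0$, whence $[\,|V^{-1}x(t)|\,]_j\le b_j$; this is the claimed invariance $|V^{-1}x(t)|\preceq b$ for all $t\ge-\bar\tau$.

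I do not expect a genuinely hard step here. The one point needing care is the chain $|V^{-1}x(t)|\preceq b\preceq T_0(\beta_\epsilon)$ on the initial interval, which is what makes Theorem~\ref{thm:nlpertct}(\ref{item:6}) applicable with $\gamma=0$; this rests on the fixed-point identity $T_0(b)=b$ (hence on continuity of $T_0$) and on the monotonicity of $T_0$ (hence on $-\Lambda^{-1}\succeq 0$ together with $\delta$ being CNI). The closing limit $\epsilon\downarrow 0$ must be taken componentwise at each fixed $t$ rather than uniformly, but is otherwise immediate.
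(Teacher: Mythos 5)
Your proof is correct and follows essentially the same route as the paper: apply Theorem~\ref{thm:nlpertct}(\ref{item:6}) to $\beta_\epsilon$ with $\gamma=0$, using $|V^{-1}x(t)|\preceq b\preceq T_0(\beta_\epsilon)$ on the initial interval, and then let $\epsilon$ shrink componentwise. The only (harmless) difference is how $b\preceq T_0(\beta_\epsilon)$ is obtained: you get it directly from the fixed-point identity $T_0(b)=b$ and monotonicity of $T_0$, whereas the paper sandwiches $T_0^k(\beta_\epsilon)$ between $b$ and $T_0^k(\beta)$ for $\epsilon$ small enough; your variant is slightly cleaner and dispenses with the ``$\epsilon$ small enough'' restriction.
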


\begin{cor}[Semi-global ultimate bounds]
  \label{cor:gub}
  In addition to the hypotheses of Theorem~\ref{thm:nlpertct}, suppose
  that for every $\xi \in \R^n_{+0}$ there exist $\beta,\gamma \in
  \R^n_{+0}$ satisfying
  \begin{align}
    \label{eq:43}
    &\xi \preceq T_\gamma(\beta),\quad\text{and}\\
    \label{eq:1a}
    &-\Lambda^{-1} [\delta(\beta) + \max\{- \Lambda \gamma, 0\}] \prec \beta.
  \end{align}
  Then, $\limsup_{t\to\infty} |V^{-1} x(t)| \preceq \lim_{k\to\infty}
  T_0^k(\beta)$, with $\beta$ as above for $\xi = \max_{-\bar\tau \le
    t \le 0} |V^{-1} x(t)|$.
\end{cor}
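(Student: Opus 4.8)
The plan is to reduce Corollary~\ref{cor:gub} to a direct application of parts~(\ref{item:5a}), (\ref{item:6}) and~(\ref{item:7}) of Theorem~\ref{thm:nlpertct}, the only new ingredient being that the auxiliary vector $\beta$ is no longer fixed a priori but is produced by the corollary's hypothesis as a function of the initial data. Concretely, I would fix an arbitrary trajectory $x(\cdot)$ of~(\ref{eq:system}) and set $\xi \dfn \max_{-\bar\tau\le t\le 0}|V^{-1}x(t)|$; since $x(\cdot)$ is continuous on the compact interval $[-\bar\tau,0]$ this componentwise maximum is finite, $\xi\in\R^n_{+0}$, and $|V^{-1}x(t)|\preceq\xi$ for every $t\in[-\bar\tau,0]$. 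Applying the standing hypothesis of the corollary to this particular $\xi$ yields vectors $\beta,\gamma\in\R^n_{+0}$ satisfying~(\ref{eq:43}) and~(\ref{eq:1a}).

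Next I would verify that Theorem~\ref{thm:nlpertct} is genuinely applicable with this $\beta$, i.e. that $T_0(\beta)\prec\beta$. Here I use that $\Lambda=\max_{i\in\Ind}M_i$ is Metzler --- being a componentwise maximum of the Metzler matrices $M_i=\Met(\Lambda_i)$ --- and Hurwitz, so that $-\Lambda^{-1}\succeq 0$ by Lemma~\ref{lem:propMetzler}b). Since $\max\{-\Lambda\gamma,0\}\succeq 0$, multiplying by $-\Lambda^{-1}\succeq 0$ and recalling the definition~(\ref{eq:19}) gives
\[
T_0(\beta)=-\Lambda^{-1}\delta(\beta)\preceq -\Lambda^{-1}\bigl[\delta(\beta)+\max\{-\Lambda\gamma,0\}\bigr]\prec\beta,
\]
the last inequality being~(\ref{eq:1a}). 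Hence all the hypotheses of Theorem~\ref{thm:nlpertct} hold for this $\beta$, and part~(\ref{item:5a}) guarantees that $b\dfn\lim_{k\to\infty}T_0^k(\beta)\succeq 0$ exists.

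Finally I would chain the two implications. From $|V^{-1}x(t)|\preceq\xi$ on $[-\bar\tau,0]$ together with~(\ref{eq:43}) I get $|V^{-1}x(t)|\preceq T_\gamma(\beta)$ for all $-\bar\tau\le t\le 0$; observing that~(\ref{eq:1a}) is exactly the condition on $\gamma$ demanded by Theorem~\ref{thm:nlpertct}(\ref{item:6}), that part yields $|V^{-1}x(t)|\preceq\beta$ for all $t\ge-\bar\tau$, and then Theorem~\ref{thm:nlpertct}(\ref{item:7}) gives $\limsup_{t\to\infty}|V^{-1}x(t)|\preceq b=\lim_{k\to\infty}T_0^k(\beta)$, which is the assertion. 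I do not expect a genuine obstacle: the argument is essentially bookkeeping, and the single point requiring care is the re-verification of $T_0(\beta)\prec\beta$ for the data-dependent $\beta$ (done above from~(\ref{eq:1a})), which is what makes part~(\ref{item:5a}), and hence the definition of the ultimate bound $b$ itself, legitimate. The qualifier ``semi-global'' reflects that the conclusion holds for every initial condition while the guaranteed ultimate bound $b$ may enlarge as $\xi$ grows.
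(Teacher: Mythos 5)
Your proposal is correct and follows essentially the same route as the paper's own proof: both derive $T_0(\beta)\prec\beta$ from~(\ref{eq:1a}) using $-\Lambda^{-1}\succeq 0$, then invoke Theorem~\ref{thm:nlpertct}(\ref{item:5a}), (\ref{item:6}) and~(\ref{item:7}) in sequence, with~(\ref{eq:43}) supplying the required bound on the initial segment. The only cosmetic difference is that the paper passes through the intermediate inequality $T_0(\beta)\preceq T_\gamma(\beta)$ via~(\ref{eq:8a}), whereas you bound $T_0(\beta)$ directly; both are valid.
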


The ultimate bounds provided by Corollary~\ref{cor:gub} are
semi-global because every initial condition has an associated ultimate
bound but different initial conditions may produce different ultimate
bounds. 

\subsection{Special case: Affine perturbation bounds}
\label{sec:lin-pert-bnd}

In this subsection, we analyze a specific form of the
bounding function $\delta$ for which global ultimate bounds can be
obtained under a simple sufficient condition. We require the following
preliminary lemma.
\begin{lem}
  \label{lem:RLamF}
  Let $\Lambda \in \R^{n\times n}$ be Metzler, let $\bar
  F\in\R_{+0}^{n\times n}$, and consider 
  \begin{equation}
    \label{eq:4a}
    R \dfn -\Lambda^{-1} \bar F.
  \end{equation}
  Then,
  \begin{enumerate}[a)]
  \item If $\srad(R) < 1$ and $\Lambda$ is Hurwitz, then $\Lambda+\bar F$
    is Hurwitz.\label{item:4}
  \item If $\Lambda+\bar F$ is Hurwitz, then $\Lambda$ is Hurwitz and
    $\srad(R) < 1$.\label{item:5}
  \end{enumerate}
\end{lem}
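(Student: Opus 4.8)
The plan is to reduce both implications to the basic $M$-matrix facts already available: the characterisation of Hurwitz Metzler matrices via nonnegative inverses in Lemma~\ref{lem:propMetzler}(b)--(c), the comparison principle for Metzler matrices in Theorem~\ref{thm:cdlf}\ref{item:2}), and the Perron--Frobenius fact that a nonnegative matrix has its spectral radius as its spectral abscissa. The organising identity is
\[
  \Lambda+\bar F=\Lambda(I-R),\qquad R=-\Lambda^{-1}\bar F ,
\]
valid whenever $\Lambda$ is invertible, together with the observation that $\Lambda+\bar F$ is again Metzler because $\bar F\succeq0$.

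For part~\ref{item:4}) I would assume $\srad(R)<1$ and $\Lambda$ Hurwitz. Since $\Lambda$ is Metzler and Hurwitz, Lemma~\ref{lem:propMetzler}(b) gives $-\Lambda^{-1}\succeq0$, hence $R=(-\Lambda^{-1})\bar F\succeq0$; then $\srad(R)<1$ makes the Neumann series converge, so $(I-R)^{-1}=\sum_{k\ge0}R^{k}\succeq0$, whence $\Lambda+\bar F=\Lambda(I-R)$ is invertible with $-(\Lambda+\bar F)^{-1}=(I-R)^{-1}(-\Lambda^{-1})\succeq0$. Since $\Lambda+\bar F$ is Metzler, the converse direction of Lemma~\ref{lem:propMetzler}(b) then yields that $\Lambda+\bar F$ is Hurwitz.

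For part~\ref{item:5}) I would assume $\Lambda+\bar F$ Hurwitz. Because $\Lambda\cle\Lambda+\bar F$ and $\Lambda$ is Metzler, Theorem~\ref{thm:cdlf}\ref{item:2}) (with $\bar\Lambda=\Lambda+\bar F$) shows that $\Lambda$ is Hurwitz; hence $-\Lambda^{-1}\succeq0$, $R\succeq0$, and $-(\Lambda+\bar F)^{-1}\succeq0$ by Lemma~\ref{lem:propMetzler}(b), while both $\Lambda$ and $\Lambda+\bar F$ are invertible. I would then rewrite
\[
  (I-R)^{-1}=(\Lambda+\bar F)^{-1}\Lambda
           =(\Lambda+\bar F)^{-1}\big[(\Lambda+\bar F)-\bar F\big]
           =I-(\Lambda+\bar F)^{-1}\bar F\cge I\cge0 ,
\]
using $(\Lambda+\bar F)^{-1}\bar F\cle0$. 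Finally, $R-I$ is Metzler (since $R\succeq0$) and $-(R-I)^{-1}=(I-R)^{-1}\succeq0$, so $R-I$ is Hurwitz by Lemma~\ref{lem:propMetzler}(b); since $\srad(R)=\ab(R)=\ab(R-I)+1$ for the nonnegative matrix $R$, this gives $\srad(R)<1$.

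The routine content is the algebra of the two displayed identities and the sign bookkeeping for products of nonnegative matrices. The only step that is not purely mechanical is the last line of part~\ref{item:5}): converting nonnegativity of $(I-R)^{-1}$ into the bound $\srad(R)<1$ requires the Perron--Frobenius identity $\ab(R)=\srad(R)$ for $R\succeq0$. A more hands-on variant I would keep in reserve replaces this by a homotopy along $\Lambda+t\bar F$, $t\in[0,1]$: each such matrix is Metzler and $\cle\Lambda+\bar F$, hence Hurwitz and invertible by Theorem~\ref{thm:cdlf}\ref{item:2}), so $I-tR=\Lambda^{-1}(\Lambda+t\bar F)$ is invertible and no eigenvalue of $R$ lies in $[1,\infty)$; as $\srad(R)$ is itself an eigenvalue of the nonnegative matrix $R$, the bound follows.
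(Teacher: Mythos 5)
Your proof is correct, and parts a) and the first half of b) coincide with the paper's argument: the same factorization $\Lambda+\bar F=\Lambda(I-R)$, the same use of Lemma~\ref{lem:propMetzler}(b) together with nonnegativity of $(I-R)^{-1}$ for a), and the same appeal to Theorem~\ref{thm:cdlf}\ref{item:2}) with $\bar\Lambda=\Lambda+\bar F$ to get $\Lambda$ Hurwitz in b). The only divergence is the final step of b): the paper observes that $-(\Lambda+\bar F)=(-\Lambda)-\bar F$ is a regular splitting of the inverse-positive matrix $-(\Lambda+\bar F)$ and simply cites the regular-splitting convergence theorem of Berman--Plemmons to conclude $\srad(-\Lambda^{-1}\bar F)<1$, whereas you re-prove the relevant instance from scratch via the identity $(I-R)^{-1}=I-(\Lambda+\bar F)^{-1}\bar F\succeq 0$, the observation that $R-I$ is Metzler with $-(R-I)^{-1}\succeq 0$ (hence Hurwitz by Lemma~\ref{lem:propMetzler}(b)), and the Perron--Frobenius identity $\ab(R)=\srad(R)$ for $R\succeq 0$. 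Your route buys self-containedness at the cost of one extra Perron--Frobenius fact (which the paper already invokes elsewhere, in the proof of Lemma~\ref{lem:affine}); the paper's route is shorter but leans on an external theorem. Both are sound, and your backup homotopy argument along $\Lambda+t\bar F$ is also valid, though unnecessary.
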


The main result for the case of affine perturbation bounds is the
following.
\begin{theorem}
  \label{thm:UBlinbnd}
  Consider a switching system (\ref{eq:system}) with perturbation
  bound of the form (\ref{eq:boundw})--(\ref{eq:barx}), where the
  bounding functions $\delta_i$ are CNI. Let $V\in\C^{n\times n}$ be
  invertible, define $\Lambda_i$ and $\Lambda$ as in (\ref{eq:88}),
  and suppose that $\Lambda$ is Hurwitz. Consider $\psi : \R^n_{+0}
  \to \R^n_{+0}$ as defined in (\ref{eq:40}) and suppose that there
  exists
  \begin{equation}
    \label{eq:15a}
    \tilde\delta(x) \dfn \bar F x + \bar w,
  \end{equation}
  for some $\bar F \in \R^{n\times n}_{+0}$ and $\bar w \in
  \R^n_{+0}$, satisfying $\tilde\delta(x) \cge \psi(x)$ for all
  $x\in\R^n_{+0}$, and such that $\srad(R) < 1$ with $R$ as in
  (\ref{eq:4a}).
  Define
  \begin{equation}
    \label{eq:5a}
    \tilde b \dfn (\id - R)^{-1} (-\Lambda^{-1}) \bar w.
  \end{equation}
  Then,
  \begin{enumerate}[(a)]
  \item Invariance. If $|V^{-1}x(t)| \preceq \tilde b$ for $-\bar\tau \le t
    \le 0$, then $|V^{-1}x(t)| \preceq \tilde b$ for all $t\ge
    -\bar\tau$.\label{item:1a}
  \item Global ultimate bounds. $\limsup_{t\to\infty} |V^{-1}x(t)|
    \preceq \tilde b$.\label{item:2a}
  \item Tighter global ultimate bounds. Suppose that there exists a
    continuous and CNI $\delta : \R^n_{+0} \to \R^n_{+0}$
    satisfying
    \begin{equation}
      \label{eq:52}
      \psi(x) \preceq \delta(x) \preceq \tilde\delta(x),\quad
      \text{for all }x\in \R^n_{+0}.
    \end{equation}
    Define $T_0 :\R^n_{+0} \to \R^n_{+0}$ as $T_0(x)= -\Lambda^{-1}
    \delta(x)$. Then, $\limsup_{t\to\infty} |V^{-1}x(t)| \preceq
    \lim_{k\to\infty} T_0^k(\tilde b) \preceq \tilde b$.\label{item:8}
  \item There exists $D$ diagonal and positive definite such that
    \begin{equation}
      \label{eq:14}
      (\Lambda+\bar F)' D + D(\Lambda + \bar F) < 0
    \end{equation}\label{item:10}
  \item Ultimate bounds via standard Lyapunov
    techniques.\label{item:11} If, in addition, $\bar\tau=0$ (no
    delay), then for each $D$ as in (\ref{item:10}) above, the
    derivative\footnote{Strictly mathematically speaking,
        this derivative may not exist at switching instants. This
        problem can be avoided by requiring the switching function to
        be right-continuous and to have a finite number of
        discontinuities in every bounded interval, and by defining
        $\dot L(t,x)$ as an upper Dini derivative. We do not delve
        into these technicalities here.} of the function $L(x) \dfn
    x'P x$ with $P=\real\{(V^{-1})^* D V^{-1}\}$ along any trajectory
    of (\ref{eq:system}) satisfies $\dot L(t,x) < 0$ for all $t$ and
    all $x$ such that $\|x\|$ is big enough.
  \end{enumerate}
\end{theorem}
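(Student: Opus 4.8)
The plan is to pass to the coordinates $z=V^{-1}x$, in which $L(x)=x'Px=z^{*}Dz=\sum_{j=1}^{n}d_{j}|z_{j}|^{2}$; this identity for real $x$ holds because $(V^{-1})^{*}DV^{-1}$ is Hermitian and $z^{*}Dz\in\R$, so it coincides with its real part. On any interval where $\sw(t)=i$ is constant, $\dot z=\Lambda_{i}z+V^{-1}H_{i}w_{i}$, hence
\begin{equation*}
  \dot L(t,x)=z^{*}(\Lambda_{i}^{*}D+D\Lambda_{i})z+2\real\{z^{*}DV^{-1}H_{i}w_{i}\},
\end{equation*}
with $\dot L$ read as an upper Dini derivative at switching times (cf. the footnote). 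I would then bound the two terms separately, the target being an estimate of the form $\dot L(t,x)\le -|z|'Q|z|+2|z|'D\bar w$ with $Q$ positive definite.

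For the quadratic-in-$z$ term, the key is the componentwise matrix inequality $\Met(\Lambda_{i}^{*}D+D\Lambda_{i})\cle \Lambda'D+D\Lambda$. Entrywise, the $(j,j)$ entry of $\Lambda_{i}^{*}D+D\Lambda_{i}$ equals $2d_{j}\real\{(\Lambda_{i})_{jj}\}=2d_{j}(M_{i})_{jj}$, and for $j\neq k$ its modulus is at most $d_{k}|(\Lambda_{i})_{kj}|+d_{j}|(\Lambda_{i})_{jk}|=d_{k}(M_{i})_{kj}+d_{j}(M_{i})_{jk}$; since the $d_{j}$ are positive and $M_{i}\cle\Lambda$ by (\ref{eq:88}), each of these is dominated by the corresponding entry of $\Lambda'D+D\Lambda$. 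Applying Lemma~\ref{lem:propMetzler}(e) to the Hermitian matrix $\Lambda_{i}^{*}D+D\Lambda_{i}$ and using $|z|\cge 0$ then gives $z^{*}(\Lambda_{i}^{*}D+D\Lambda_{i})z\le|z|'\Met(\Lambda_{i}^{*}D+D\Lambda_{i})|z|\le|z|'(\Lambda'D+D\Lambda)|z|$.

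For the perturbation term, since $D$ is diagonal, $|z^{*}DV^{-1}H_{i}w_{i}|\le|z|'D|V^{-1}H_{i}w_{i}|$. Because $\bar\tau=0$, (\ref{eq:barx}) gives $\theta(t)=|x(t)|\cle|V|\,|z|$; combining (\ref{eq:boundw}), the CNI property of $\delta_{i}$, the identity $\max_{|w_{i}|\cle a}|V^{-1}H_{i}w_{i}|=|V^{-1}H_{i}|a$, and the definitions (\ref{eq:40}) and (\ref{eq:15a}) of $\psi$ and $\tilde\delta$ (recall $\tilde\delta\cge\psi$), one gets $|V^{-1}H_{i}w_{i}|\cle\psi(|z|)\cle\bar F|z|+\bar w$. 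Hence $2\real\{z^{*}DV^{-1}H_{i}w_{i}\}\le 2|z|'D(\bar F|z|+\bar w)=|z|'(\bar F'D+D\bar F)|z|+2|z|'D\bar w$. Adding the two bounds,
\begin{equation*}
  \dot L(t,x)\le|z|'\big[(\Lambda+\bar F)'D+D(\Lambda+\bar F)\big]|z|+2|z|'D\bar w=-|z|'Q|z|+2|z|'D\bar w,
\end{equation*}
with $Q\dfn-[(\Lambda+\bar F)'D+D(\Lambda+\bar F)]>0$ by part~(\ref{item:10}). Using $\||z|\|_{2}=\|z\|_{2}=\|V^{-1}x\|_{2}\ge\|x\|_{2}/\|V\|_{2}$ together with $\|z\|_{2}\le\|V^{-1}\|_{2}\|x\|_{2}$, this yields $\dot L(t,x)\le-\lambda_{\min}(Q)\|x\|_{2}^{2}/\|V\|_{2}^{2}+2\|D\bar w\|_{2}\|V^{-1}\|_{2}\|x\|_{2}$, which is strictly negative once $\|x\|_{2}>2\|D\bar w\|_{2}\|V^{-1}\|_{2}\|V\|_{2}^{2}/\lambda_{\min}(Q)$ — a threshold independent of $t$ and of the active mode, and the norm equivalence on $\R^{n}$ turns this into the claimed statement.

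I expect the main obstacle to be the componentwise matrix inequality $\Met(\Lambda_{i}^{*}D+D\Lambda_{i})\cle\Lambda'D+D\Lambda$: it is the only genuinely non-routine step, and it must simultaneously exploit the triangle-inequality estimate on the off-diagonal entries, the positivity of the diagonal of $D$ (which makes the entrywise comparison survive multiplication by $D$), the definition of $\Lambda$ as the componentwise maximum of the $M_{i}$, and the fact that Lemma~\ref{lem:propMetzler}(e) only provides an inequality. The remaining ingredients — the CNI/monotonicity chain for the perturbation bound and the norm estimates relating $z$ and $x$ — are routine.
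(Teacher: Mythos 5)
Your argument addresses only part~(\ref{item:11}) of the theorem. The statement has five claims, and your proposal proves none of (\ref{item:1a})--(\ref{item:10}): the invariance of the set $\{|V^{-1}x|\preceq\tilde b\}$, the global ultimate bound $\tilde b$, the tighter bound $\lim_k T_0^k(\tilde b)$, and the existence of the diagonal $D$ are all left unestablished, yet you invoke part~(\ref{item:10}) to get $Q>0$ in your final estimate. That is a genuine gap, not a presentational one: part~(\ref{item:10}) requires first showing that $\srad(R)<1$ together with $\Lambda$ Metzler and Hurwitz forces $\Lambda+\bar F$ to be Hurwitz (Lemma~\ref{lem:RLamF}, via the factorization $\Lambda+\bar F=\Lambda(\id-R)$ and inverse-positivity), and then applying Theorem~\ref{thm:cdlf}\ref{item:1}) with $\bar\Lambda=\Lambda+\bar F$. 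Parts~(\ref{item:1a})--(\ref{item:8}) are obtained in the paper by specializing the nonlinear machinery of Section~\ref{sec:nonlin-pert-bnd} to the affine comparison map $\tilde T_0(x)=Rx-\Lambda^{-1}\bar w$, whose unique fixed point is $\tilde b$: Lemma~\ref{lem:affine} supplies, for every initial condition, a vector $\beta$ with $\tilde T_0(\beta)+v\prec\beta$, which is exactly what Theorem~\ref{thm:nlpertct}, Corollary~\ref{cor:nlinv} and Corollary~\ref{cor:gub} need, and the monotone comparison $T_0\preceq\tilde T_0$ yields $\lim_k T_0^k(\tilde b)\preceq\tilde b$ for part~(\ref{item:8}). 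None of this is replaceable by the Lyapunov computation you give, since the whole point of (\ref{item:1a})--(\ref{item:8}) is that they are componentwise and hold with the delay $\bar\tau>0$, whereas your derivative bound assumes $\bar\tau=0$.

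For the part you do prove, your argument is correct and is essentially the paper's own: the same change of coordinates $z=V^{-1}x$, the same entrywise chain $\Met(\Lambda_i^*D+D\Lambda_i)\cle M_i'D+DM_i\cle\Lambda'D+D\Lambda$ combined with Lemma~\ref{lem:propMetzler}e), and the same absorption of the perturbation through $|V^{-1}H_iw_i|\cle\psi(|z|)\cle\bar F|z|+\bar w$. One small inaccuracy: $\max_{|w_i|\cle a}|V^{-1}H_iw_i|=|V^{-1}H_i|\,a$ is not an identity when $V$ has complex entries (cf.\ the remark following (\ref{eq:22a})); only the inequality $\preceq$ holds, but that is the direction you use, so nothing breaks.
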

Theorem~\ref{thm:UBlinbnd} gives an invariant region and global
ultimate bounds for the case when the perturbation bound
$\tilde\delta$ has affine form [see (\ref{eq:15a})]. The main
additional assumption required by this theorem is that the matrix $R$
constructed from the system matrix $\Lambda$ and the perturbation
bound matrix $\bar F$ [see (\ref{eq:4a})] has spectral radius less
than $1$. According to Lemma~\ref{lem:RLamF}\ref{item:5}), we may seek
$V$ causing both $\Lambda$ to be Hurwitz and $\srad(R)<1$ by means of
the following optimization problem, similar to that in
Remark~\ref{rem:opt}:
\begin{quote}
  minimize $\ab(\Lambda+\bar F)$ subject to $V\in\C^{n\times n}$ invertible,
\end{quote}
where it is sufficient to find $V$ so that $\ab(\Lambda+\bar F)<0$.
Note also that, according to the hypotheses of
Theorem~\ref{thm:UBlinbnd} and Lemma~\ref{lem:RLamF}\ref{item:4}), and
since the matrix $\Lambda$ from (\ref{eq:88}) is Metzler for every
$V\in\C^{n\times n}$ invertible, seeking $V$ in the proposed manner
does not incur any loss of generality. We will illustrate this
procedure in Section~\ref{sec:ct-example}.

The main advantage of the affine form of the perturbation bound is
that an invariant region [Theorem~\ref{thm:UBlinbnd}(\ref{item:1a})]
and global ultimate bound [Theorem~\ref{thm:UBlinbnd}(\ref{item:2a})]
can be straightforwardly computed, without having to iterate a map or
to search for a vector $\beta$ such that $T_0(\beta) \prec \beta$ as
was required in Theorem~\ref{thm:nlpertct}: the quantity $\tilde b$ is
guaranteed to exist [under the assumption that $\srad(R)<1$], and can
be computed directly from the expression
(\ref{eq:5a}). 

Theorem~\ref{thm:UBlinbnd}(\ref{item:8}) deals with the
case when the perturbation can be overbounded with affine
$\tilde\delta$ but a tighter CNI perturbation bound $\delta$ exists
which is not of affine form. In this case,
Theorem~\ref{thm:UBlinbnd}(\ref{item:8}) avoids the need to search for
a vector $\beta$ such that $T_0(\beta) \prec \beta$ as in
Theorem~\ref{thm:nlpertct} and shows that a global ultimate bound
possibly tighter than that provided by the quantity $\tilde b$ in
Theorem~\ref{thm:UBlinbnd}(\ref{item:2a}) can be obtained by iterating
the map $T_0$ on $\tilde b$.

Theorem~\ref{thm:UBlinbnd}(\ref{item:10})--(\ref{item:11}) provide a
way of computing a quadratic function so that ultimate bounds can be
obtained via standard Lyapunov techniques, in the case when no delay
is present. Note that how to compute such a suitable quadratic
function is not evident due to the componentwise absolute value in the
form of the perturbation bound (\ref{eq:boundw})--(\ref{eq:barx}).

Results similar to those of Theorem~\ref{thm:UBlinbnd}(\ref{item:2a})
were given in Theorem 3.1 of \cite{kofserhai_auto08} (for
non-switching systems). However, the bounds in the latter reference
require the matrix $V$ to yield the similarity transformation that
takes the system $A$ matrix into Jordan canonical form. Note that
requesting such a condition for $V$ in the current switching case is
usually impossible since not all the different $A_i$ will be taken to
their Jordan canonical form by the same transformation. In addition,
the bounds in Theorem 3.1 of \cite{kofserhai_auto08} are derived
directly on the components of $|x(t)|$ whereas those in
Theorem~\ref{thm:UBlinbnd}(\ref{item:2a}) above correspond to
$|V^{-1}x(t)|$. This difference makes possible the extension of the
ultimate bound results in order to obtain tighter bounds in
Theorem~\ref{thm:UBlinbnd}(\ref{item:8}) and to derive the
relationship with CQLF in
Theorem~\ref{thm:UBlinbnd}(\ref{item:10})--(\ref{item:11}).

\begin{rem}
  \label{rem:suffcond}
  If the constant part $\bar w$ of the affine bound (\ref{eq:15a}) is
  zero, then $\tilde b=0$ in (\ref{eq:5a}) and
  Theorem~\ref{thm:UBlinbnd}(\ref{item:2a}) implies that
  $\lim_{t\to\infty} x(t) = 0$. Consequently, the condition
  $\srad(R)<1$ in Theorem~\ref{thm:UBlinbnd} or, equivalently
  according to Lemma~\ref{lem:RLamF}, the condition $\Lambda+\bar F$
  Hurwitz, is a sufficient condition for the uniform stability of a
  switching system with a perturbation bound depending linearly on the
  componentwise absolute value of a delayed state.
\end{rem}

In the following section we illustrate all the above results by means
of a numerical example.

\section{Example}
\label{sec:ct-example}

Consider a switching system of the form (\ref{eq:system}), with $N=2$,
$n=3$, $k_1=1$, $k_2=2$, and
\begin{align}
  \label{eq:17a}
  A_1 &=
  \begin{bmatrix}
    -6.91 & 1.92 & 4.4\\
    1.32  & -1.54 & -1.41\\
    4.47  & -3.02 & -5.43
  \end{bmatrix},
  & H_1 &=
  \begin{bmatrix}
    0\\ 0.02\\ 0
  \end{bmatrix},\displaybreak[0]\\
  \label{eq:24a}
  A_2 &=
  \begin{bmatrix}
    -9.27 & -0.19 & 7.15\\
     2.02 & -1.38 & -1.94\\
     6.84 & -4.28 & -6.64
  \end{bmatrix},
  & H_2 &=
  \begin{bmatrix}
    .01 & -.05\\
    .01 & 0\\
    .02 & .03
  \end{bmatrix}.
\end{align}
The perturbation vectors $w_1(t)\in\R$ and $w_2(t)\in\R^2$ are
componentwise bounded by $|w_i(t)| \cle \hat\delta_i(\theta(t))$ with
$\theta(t)$ as defined in (\ref{eq:barx}), $\bar\tau=0.1$,
$\hat\delta_1 : \R^3_{+0} \to \R^1_{+0}$ and $\hat\delta_2 : \R^3_{+0}
\to \R^2_{+0}$ given by
\begin{align}
  \label{eq:18a}
  \hat\delta_1(\theta) &= |\sin \theta_3|,\\
  \label{eq:45}
  \hat\delta_2(\theta) &=
  \begin{bmatrix}
    \theta_1 e^{-2\theta_1} + |\cos \theta_2|\\
    5\theta_3 + 1
  \end{bmatrix},
\end{align}
and $\theta = [\theta_1,\theta_2,\theta_3]'$. Note that $\hat\delta_1$
and $\hat\delta_2$ as in (\ref{eq:18a})--(\ref{eq:45}) are
continuous but not CNI. Following Remark~\ref{rem:cni},
we compute the tightest continuous CNI overbounds
$\delta_1$ and $\delta_2$:
\begin{align}
  \label{eq:59}
  \hat\delta_1(\theta) &\preceq \delta_1(\theta) \dfn
  \begin{cases}
    \sin\theta_3 &\text{if }0 \le \theta_3 \le \pi/2,\\
    1 &\text{if }\theta_3 \ge \pi/2.
  \end{cases}\\
  \label{eq:61}
  \hat\delta_2(\theta) &\preceq \delta_2(\theta) \dfn
  \begin{bmatrix}
    \begin{cases}
      \theta_1 e^{-2\theta_1}+1 &\text{if }\theta_1 \le 1/2,\\
      e^{-1}/2+1 &\text{if }\theta_1 > 1/2,
    \end{cases}\\
    5\theta_3 + 1
  \end{bmatrix}.
\end{align}
In turn, $\delta_1$ and $\delta_2$ have affine bounds, as we next
show. From (\ref{eq:59})--(\ref{eq:61}), we have
\begin{align}
  \label{eq:48}
  \delta_1(\theta) &\le \theta_3 &= \bar F_1 \theta + \bar w_1,\\
  \label{eq:49}
  \delta_2(\theta) &\preceq
  \begin{bmatrix}
    \theta_1 + 1\\
    5\theta_3 + 1
  \end{bmatrix} &= \bar F_2 \theta + \bar w_2,
\end{align}
where we have defined
\begin{align}
  \label{eq:55}
  \bar F_1 &\dfn
  \begin{bmatrix}
    0 & 0 & 1
  \end{bmatrix},
  &\bar w_1 &\dfn 0,\\ 
  \label{eq:23}
  \bar F_2 &\dfn
  \begin{bmatrix}
    1 & 0 & 0\\
    0 & 0 & 5
  \end{bmatrix},
  &\bar w_2 &\dfn
  \begin{bmatrix}
    1\\ 1
  \end{bmatrix}.
\end{align}

\subsection{Nonlinear perturbation bound}
\label{sec:nonl-pert-bound}

\subsubsection{Transient and ultimate bounds via componentwise method}

In order to apply Theorem~\ref{thm:nlpertct}, we need to find a
suitable invertible matrix $V$ and a positive vector $\beta$ so that
$T_0(\beta) \prec \beta$. To find an invertible $V\in\C^{n\times n}$
such that $\Lambda$ in (\ref{eq:88}) is Hurwitz, we follow the
strategy outlined in Remark~\ref{rem:opt}. We thus minimize
$\ab(\Lambda)$ searching over $V$. This optimization was implemented
in Matlab$^{\textregistered}$, yielding
\begin{equation}
  \label{eq:4}
  V = \left[
    \begin{smallmatrix}
      2.408 & 1.745 & 0.162\\
      -0.634 & -1.363 & 0.0351\\
      -2.144 & 2.217 & 0.118
    \end{smallmatrix}\right] +
    \left[
      \begin{smallmatrix}
        .443 & 2.059 & 1.558\\
        -.117 & -1.815 & .494\\
        -.399 & 3.247 & 1.652
      \end{smallmatrix}\right]\mathbf{i}
\end{equation}
for which, from (\ref{eq:88}),
\begin{equation}
  \label{eq:31}
  \Lambda =
  \begin{bmatrix}
    -11.34 & 1.145 & .191\\
    .0067 & -.0979 & .0038\\
    .0130 & 1.912 & -1.605
  \end{bmatrix}
\end{equation}
and $\ab(\Lambda) = -.0923 < 0$. Next, we require a continuous and CNI
function $\delta$ satisfying (\ref{eq:6}). Since both $\delta_1$ and
$\delta_2$ in (\ref{eq:59})--(\ref{eq:61}) are continuous and CNI,
then $\psi$ as defined in (\ref{eq:40}) is continuous and CNI, and
hence we may take $\delta\equiv\psi$. We next follow the procedure
given in Algorithm~1 and Theorem~3 of \cite{kofhai_ijc07} in order to
find the positive vector $\beta$. Using this procedure we select a
positive vector $\alpha=[1,1,1]'$ and iterate $T_\alpha(x) =
-\Lambda^{-1}\psi(x) + \alpha$ from $0$, numerically computing $\beta
= \lim_{k\to\infty} T_\alpha^k(0)$, for which
\begin{equation}
  \label{eq:63}
  T_0(\beta) =
  \begin{bmatrix}
    3.235\\
    18.23\\
    25.82
  \end{bmatrix} \prec \beta =
  \begin{bmatrix}
    4.235\\
    19.23\\
    26.82
  \end{bmatrix}.
\end{equation}
By Theorem~\ref{thm:nlpertct}(\ref{item:5a}), we can (numerically)
compute
\begin{equation}
  \label{eq:64}
  b = \lim_{k\to\infty} T_0^k(\beta) = 
  \begin{bmatrix}
    0.127\\
    0.715\\
    1.017
  \end{bmatrix},\quad |V|b =
  \begin{bmatrix}
    3.84\\
    2.21\\
    4.78
  \end{bmatrix}.
\end{equation}
Then, application of Theorem~\ref{thm:nlpertct}(\ref{item:6}) with
$\gamma=0$ shows that if $|V^{-1}x(t)| \preceq T_0(\beta)$ for all
$-\bar\tau \le t \le 0$, then $|V^{-1}x(t)| \preceq \beta$ for all
$t\ge -\bar\tau$, and the combination of the latter result with
Theorem~\ref{thm:nlpertct}(\ref{item:7}) shows that if $|V^{-1} x(t)|
\preceq T_0(\beta)$ for all $-\bar\tau \le t \le 0$, then
$\limsup_{t\to\infty} |V^{-1}x(t)| \preceq b$, with $b$ given by
(\ref{eq:64}) and, according to Remark~\ref{rem:shape}, also
$\limsup_{t\to\infty} |x(t)| \cle |V|b$.

In addition, if we require a larger set of initial conditions for
which the bounds should be valid, we may follow
Theorem~\ref{thm:nlpertct}(\ref{item:9}). We thus select $c=[1,1,1]'$
and compute $\bar\epsilon=0.8384$, according to (\ref{eq:67}). Consequently,
the transient bounds $|V^{-1}x(t)| \preceq \beta$ for all $t\ge
-\bar\tau$ will be valid not only if $|V^{-1}x(t)| \preceq T_0(\beta)$
for all $-\bar\tau \le t \le 0$ but also whenever $|V^{-1}x(t)|
\preceq T_\gamma(\beta) = T_0(\beta) + \gamma$ for all $-\bar\tau \le t \le 0$, with
$\gamma=c\epsilon$ and any positive $\epsilon<\bar\epsilon$. For
example, for $\epsilon=0.838<\bar\epsilon$, then 
$T_\gamma(\beta) = [
  4.073, 19.068, 26.658]'.$

\subsubsection{Ultimate bound via quadratic Lyapunov function}

We next intend to compute ultimate bounds by means of a quadratic
Lyapunov function. Note that the matrix $V$ in (\ref{eq:4}) was
obtained using information on only the switching linear part of the
system, without information on the perturbation bound. Also, note that
the bounds computed above by means of Theorem~\ref{thm:nlpertct} are
the same for every value of the maximum delay, $\bar\tau$, provided
that the bound on the initial condition is satisfied for all
$-\bar\tau \le t \le 0$. In order to derive ultimate bounds by means
of a Lyapunov function, we next assume that $\bar\tau = 0$.

The computation of a quadratic Lyapunov function $L(x) = x'Px$ for the
switching linear part of the system (disregarding the perturbation)
can be performed via solving the LMIs
\begin{equation}
  \label{eq:25}
  A_i' P + P A_i < 0, \quad\text{for }i\in\Ind,
\end{equation}
with $P=P' > 0$. Solving these LMIs in Matlab$^{\textregistered}$ yields
\begin{equation}
  \label{eq:26}
  P =
  \begin{bmatrix}
    .1638 & .1634 & .012\\
    .1634 & 1.9577 & -.3602\\
    .012  & -.3602 & .2285
  \end{bmatrix}
\end{equation}
The derivative of $L(x)$ along the trajectories of the system satisfies
\begin{align}
  \dot L(t,x) &= x' (A_{\sw(t)}' P + P A_{\sw(t)}) x + 2x' P H_{\sw(t)} w_{\sw(t)}(t)\notag\\
  &\le \max_{i\in\Ind} \left[x' (A_i' P + P A_i) x + 2 \max_{|w| \cle
      \delta_i(|x|)} |x'PH_i w|\right]\notag\\
  \label{eq:29}
  &= \max_{i\in\Ind} \left[x' (A_i' P + P A_i) x + 2  |x'PH_i| \delta_i(|x|)\right]
\end{align}
Note that the bound on $\dot L(t,x)$ given by (\ref{eq:29}) is tight,
i.e., for every $x\in\R^n$, there exists a switching state $\sw(t)$
and a possible value of $w_{\sw(t)}(t)$ so that $\dot L(t,x)$ equals
the right-hand side of (\ref{eq:29}).  A necessary condition to be
able to compute an ultimate bound by means of $L(x)$ is that
$\max_{x'Px=k} \dot L(t,x) < 0$ for some $k>0$. Numerical search for
such a $k>0$ yields no solution.

An alternative way of computing a quadratic Lyapunov function without
employing information on the perturbation bound is given by
Theorem~\ref{thm:cdlf}\ref{item:3}) using $\bar\Lambda = \Lambda$. We
thus solve the LMIs (\ref{eq:2}) for $D>0$ diagonal. This yields
$D=\diag(.0411,.5584,.0800)$ for which
\begin{equation*}
  P = \real\{(V^{-1})^* D V^{-1}\} =
  \begin{bmatrix}
    .0088 & .0134 &.0017\\
    .0134 & .0763 &-.0063\\
    .0017 & -.0063 & .0074
  \end{bmatrix}
\end{equation*}
As with the previous $P$ above, numerical search for $k>0$ such that
$\max_{x'Px=k} \dot L(t,x) < 0$ yields no solution.

\subsection{Affine perturbation bound}
\label{sec:line-pert-bound}

\subsubsection{Ultimate bound via componentwise method}

We next will take the affine perturbation bound into account for the
computation of the matrix $V$. Since the perturbation bounds
$\delta_1$ and $\delta_2$ admit affine bounds, as shown by
(\ref{eq:48})--(\ref{eq:23}), then the function $\psi$ in
(\ref{eq:40}) corresponding to $\delta_1$ and $\delta_2$ as in
(\ref{eq:59})--(\ref{eq:61}) can actually be bounded by an affine CNI
function $\tilde\delta$ for every $V\in\C^{n\times n}$
invertible. 
To see this, note that
\begin{equation}
  \label{eq:22a}
  \max_{|w_i| \preceq \delta_i(|V|x)} |V^{-1} H_i w_i| \preceq |V^{-1} H_i| \delta_i(|V|x),
\end{equation}
for $i=1,2$ [note that the right-hand side of (\ref{eq:22a}) may not
be a tight bound on its left-hand side only when $V$ has complex
components]. Combining (\ref{eq:48})--(\ref{eq:49}) and
(\ref{eq:22a}), and recalling (\ref{eq:40}), we have
\begin{align}
  \label{eq:56}
  \psi(x) &\preceq \max_{i\in\{1,2\}} \left[ |V^{-1} H_i| (\bar F_i |V| x
    + \bar w_i) \right]\\
  \label{eq:57}
  &\preceq \tilde\delta(x) \dfn \bar F x + \bar w,\quad\text{with}\\
  \label{eq:Fbar}
  \bar F &\dfn \max_{i\in\{1,2\}} |V^{-1}H_i| \bar F_i |V|,\\
  \label{eq:3a}
  \bar w &\dfn \max_{i\in\{1,2\}} |V^{-1}H_i| \bar w_i.
\end{align}
We have thus shown that for each $V\in\C^{n\times n}$ invertible, the
nonnegative function $\psi$ in (\ref{eq:40}) admits a bound
$\tilde\delta$ of the affine form (\ref{eq:15a}). In order to apply
Theorem~\ref{thm:UBlinbnd}, we require an invertible matrix $V$ so
that $\Lambda$ is Hurwitz and $\srad(R)<1$, with $R$ as in
(\ref{eq:4a}). The previously used matrix $V$ given in~\eqref{eq:4}
does not satisfy $\srad(R)<1$, hence a new $V$ is required. According
to Lemma~\ref{lem:RLamF}\ref{item:5}), it suffices to find $V$ such
that $\Lambda+\bar F$ is Hurwitz. Similarly to Remark~\ref{rem:opt},
we seek $V$ by means of the following optimization problem:
\begin{quote}
  minimize $\ab(\Lambda+\bar F)$ subject to $V\in\C^{n\times n}$ invertible,
\end{quote}
where it is sufficient to find $V$ such that $\ab(\Lambda+\bar F) < 0$.
Performing this optimization in Matlab$^{\textregistered}$ yields
\begin{equation}
  \label{eq:20a}
  V =
  \left[\begin{smallmatrix}
    2.244 & -2.715 & 0\\
    0.706 & 0.715 & -4.302\\
    2.359 & 2.418 & 1.674
  \end{smallmatrix}\right] +
  \left[\begin{smallmatrix}
    -4.401 & 2.891 & 0\\
    -1.385 & -0.761 & -3.789\\
    -4.625 & -2.575 &1.470
  \end{smallmatrix}\right]\mathbf{i}.
\end{equation}
Operating as in (\ref{eq:88}) yields
\begin{equation}
  \label{eq:21a}
  \Lambda =
  \begin{bmatrix}
    -1.599 & 0.001 & 2.620\\
    0.268  & -11.34 & 1.028\\
    0.006  & 0.004  & -0.103
  \end{bmatrix},
\end{equation}
and, from (\ref{eq:Fbar})--(\ref{eq:3a}),
\begin{equation*}
  \bar F =
  \begin{bmatrix}
    0.633 & 0.450 & 0.205\\
    2.749 & 1.879 & 1.150\\
    0.390 & 0.269 & 0.154
  \end{bmatrix}\cdot 10^{-1},\quad
  \bar w =
  \begin{bmatrix}
    0.5\\ 1.17\\ 0.2
  \end{bmatrix}\cdot 10^{-2}.
\end{equation*}
Computing the matrix
$R$ in (\ref{eq:4a}) yields $\srad(R)<1$ and we may obtain $\tilde b$
as in (\ref{eq:5a}):
\begin{equation}
  \label{eq:58}
  \tilde b =
  \begin{bmatrix}
    0.903\\
    0.098\\
    0.521
  \end{bmatrix},\qquad
  |V|\tilde b =
  \begin{bmatrix}
    4.85\\
    4.49\\
    6.20
  \end{bmatrix}.
\end{equation}
By Theorem~\ref{thm:UBlinbnd}(\ref{item:2a}) we have
$\limsup_{t\to\infty} |V^{-1}x(t)| \preceq \tilde b$, and according to
Remark~\ref{rem:shape}, then $\limsup_{t\to\infty} |x(t)| \cle
|V|\tilde b$, for every initial condition. It may be surprising that
the componentwise ultimate bound $|V|\tilde b$ in (\ref{eq:58}) is
more conservative than the corresponding one in
(\ref{eq:64}). However, the current bounds are valid from every
initial condition as opposed to the ones in
Section~\ref{sec:nonl-pert-bound}.
%
%
%
%
In addition, we may seek a global ultimate bound tighter than the one
corresponding to $\tilde b$ in (\ref{eq:58}) by applying
Theorem~\ref{thm:UBlinbnd}(\ref{item:8}). Note that if we take $\delta
= \psi$, with $\psi$ as in (\ref{eq:40}) for $\delta_1$ and $\delta_2$
as in (\ref{eq:59})--(\ref{eq:61}), then the functions $\delta=\psi$
and $\tilde\delta$ as in (\ref{eq:57})--(\ref{eq:3a}) satisfy
(\ref{eq:52}). Thus, we iterate the map $T_0(x) = -\Lambda^{-1}
\delta(x)$ on the vector $\tilde b$ computed in (\ref{eq:58}). This
yields
\begin{equation}
  \label{eq:62}
  b = \lim_{k\to\infty} T_0^k(\tilde b) =
  \begin{bmatrix}
    0.365\\
    0.0403\\
    0.212
  \end{bmatrix},\quad
  |V| b =
  \begin{bmatrix}
    1.96\\
    1.82\\
    2.51
  \end{bmatrix},
\end{equation}
which are clearly tighter than those in (\ref{eq:58}). Moreover, the
componentwise ultimate bound $|V|b$ in (\ref{eq:62}) is also tighter
than the corresponding one in (\ref{eq:64}).

\subsubsection{Ultimate bound via quadratic Lyapunov function}

According to Theorem~\ref{thm:UBlinbnd}(\ref{item:11}), we may compute a quadratic
Lyapunov function $L(x)=x'Px$ suitable for the obtention of ultimate
bounds by means of the matrix $V$ computed in (\ref{eq:20a}). We thus
solve the LMIs (\ref{eq:14}) for $D>0$ diagonal, yielding
$D=\diag(.1812,.5127,9.962)$ and
\begin{equation*}
  P = \real\{(V^{-1})^* D V^{-1}\} =
  \begin{bmatrix}
    .0111 & -.003 & -.0064\\
    -.003 & .245 & -.0692\\
    -.0064 & -.0692 & .0301
  \end{bmatrix}
\end{equation*}
Numerical computation of the smallest $k>0$ for which $\max_{x'Px=k}
\dot L(t,x) < 0$ yields $k=.0989$, from which it can be verified that
$\dot L(t,x) < 0$ for all $x$ satisfying $x'Px > .0989$. Therefore,
$\limsup_{t\to\infty} x(t)'Px(t) \le .0989$ and we may compute the
componentwise bounds $\bar x_i = \max_{x'Px=.0989} x_i$ for
$i=1,2,3$:
\begin{equation}
  \label{eq:50}
  \begin{bmatrix}
    \bar x_1\\ \bar x_2\\ \bar x_3
  \end{bmatrix} =
  \begin{bmatrix}
    4.0448\\
    0.7926\\
    4.1443
  \end{bmatrix}
\end{equation}
The bounds for $x_1$ and $x_3$ are more conservative than the
corresponding ones given by (\ref{eq:62}) but the bound for $x_2$ is
tighter. Note that this tighter bound on the second component of the
state vector would be completely lost if bounds on the 1, 2 or
$\infty$ norms were obtained based on the fact that
$\limsup_{t\to\infty} x(t)'Px(t) \le .0989$. The bounds (\ref{eq:62})
and (\ref{eq:50}) may be combined, yielding a global ultimate bound
better than either one:
\begin{equation}
  \label{eq:65}
  \limsup_{t\to\infty}|x(t)| \cle
  \begin{bmatrix}
    1.96\\
    .7926\\
    2.51
  \end{bmatrix}
\end{equation}

\section{Conclusions}
\label{sec:conc}

We have proposed a method to compute componentwise transient bounds,
componentwise ultimate bounds, and invariant regions for a class of
switching continuous-time linear systems with perturbation bounds that
may depend nonlinearly on a delayed state. We have provided conditions
for the bounds to be of local or semi-global nature. We have also
addressed the particular case of perturbation bounds that have affine
dependence on a delayed state, for which the bounds derived are shown
to be of global nature and a novel sufficient condition for practical
stability was provided.
Another contribution of the paper was to establish that the class of
switching linear systems to which our componentwise bound and
invariant set method can be applied is strictly contained in the class
of switching linear systems that admit a CQLF, although the switching
linear system need not be close to simultaneously
triangularizable. This closes a problem left open in our previous
paper \cite{haiser_auto10}. A third contribution was to provide a
technique to compute a CQLF for switching linear systems with
perturbations bounded componentwise by affine functions of the
absolute value of the state vector components (when no delay is
present).
Future work may focus on switched systems where either the switching
signal or a continuous control input can be designed in order to
ensure a given ultimate bound (cf. \cite{kofserhai_auto08}) and on the
extension and application of the current results to networked control
systems (cf. \cite{haikof_auto07}) and to switching systems with mixed
continuous- and discrete-time dynamics.

\appendix

\section{Proofs}
\label{sec:proofs}

\subsection{Preliminary Lemmas}
\label{sec:preliminary-lemmas}

The following two lemmas derive properties of CNI functions that are
required in the proof of our main results.

\begin{lem}\label{lem:CNI}
  Let $f : \R^n_{+0} \to \R^n_{+0}$ be a continuous CNI function and
  suppose that there exists $\beta \in \R^n_{+0}$ satisfying $f(\beta)
  \preceq \beta$. Then:
  \begin{enumerate}[(i)]
  \item For every $k\in\integs_{+}$, $f^{k+1}(\beta) \preceq
    f^k(\beta)$ and 
    \begin{equation}
      \label{eq:b}
      \lim_{k\to\infty} f^k(\beta) = b \succeq0.
    \end{equation}\label{item:CNIa}
  \item For every $\epsilon\in\R^n_+$ there exist $k=k(\epsilon) \in
    \integs_+$ and $\gamma=\gamma(\epsilon) \in \R^n_+$ such that
    $f_{\gamma}^{k}(\beta) \prec b + \epsilon$, where $b$ is as
    in~\eqref{eq:b} and $f_\gamma(x) \triangleq f(x)+\gamma$, $\forall
    x\in \R^n_{+0}$.\label{item:CNIb}
 \end{enumerate}
\end{lem}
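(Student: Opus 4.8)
The plan is to prove the two parts of Lemma~\ref{lem:CNI} in order, relying only on the CNI property (Definition~\ref{defn:CNI}), continuity, and the hypothesis $f(\beta) \preceq \beta$.

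\textbf{Part (\ref{item:CNIa}).} First I would establish the monotonicity of the iterates by induction on $k$. The base case $f^2(\beta) = f(f(\beta)) \preceq f(\beta)$ follows by applying the CNI property of $f$ to the inequality $f(\beta) \preceq \beta$. For the inductive step, assuming $f^{k+1}(\beta) \preceq f^k(\beta)$, applying $f$ and invoking CNI once more yields $f^{k+2}(\beta) = f(f^{k+1}(\beta)) \preceq f(f^k(\beta)) = f^{k+1}(\beta)$. Thus $\{f^k(\beta)\}_{k\in\integs_+}$ is a componentwise nonincreasing sequence. Since $f$ maps into $\R^n_{+0}$, every iterate is $\succeq 0$, so the sequence is bounded below by $0$ componentwise. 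A bounded monotone sequence of reals converges, so each component converges and hence $\lim_{k\to\infty} f^k(\beta) = b$ exists with $b \succeq 0$. (As a by-product, by continuity of $f$, $b$ is a fixed point: $f(b) = b$; this need not be stated but is natural to note.)

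\textbf{Part (\ref{item:CNIb}).} Fix $\epsilon \in \R^n_+$. By part~(\ref{item:CNIa}), there exists $k_0 \in \integs_+$ with $f^{k_0}(\beta) \prec b + \epsilon/2$ (strict inequality holds eventually since $f^k(\beta) \to b$ componentwise). The idea is that adding a small constant $\gamma$ at each of the $k_0$ steps perturbs the result continuously, so for $\gamma$ small enough the perturbed iterate $f_\gamma^{k_0}(\beta)$ still lies below $b+\epsilon$. Concretely, I would argue by induction that the map $\gamma \mapsto f_\gamma^{k_0}(\beta)$ is continuous at $\gamma = 0$ with value $f^{k_0}(\beta)$: for $k_0 = 1$ this is $f(\beta) + \gamma$, clearly continuous; and if $\gamma \mapsto f_\gamma^{k}(\beta)$ is continuous at $0$ with value $f^k(\beta)$, then $f_\gamma^{k+1}(\beta) = f(f_\gamma^k(\beta)) + \gamma$ is continuous at $0$ (composition of continuous maps plus $\gamma$) with value $f(f^k(\beta)) = f^{k+1}(\beta)$. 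Hence there exists $\gamma \in \R^n_+$ small enough that $f_\gamma^{k_0}(\beta) \prec b + \epsilon$. Taking $k(\epsilon) = k_0$ and $\gamma(\epsilon) = \gamma$ completes the proof.

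\textbf{Main obstacle.} The only subtlety is in part~(\ref{item:CNIb}): making precise that the finite composition $f_\gamma^{k_0}$ depends continuously on the additive shift $\gamma$, uniformly enough to push the strict inequality through. This is where continuity of $f$ is essential (the CNI property alone would not suffice), and it is handled cleanly by the induction on the iteration count sketched above. Everything else is a routine monotone-convergence argument.
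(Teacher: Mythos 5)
Your proposal is correct and follows essentially the same route as the paper: part (i) is the identical monotone-convergence argument, and for part (ii) the paper also first picks $k$ so that the unperturbed iterate is within $\epsilon/2$ of $b$ and then uses continuity of $f$ to choose $\gamma$ small enough (the paper phrases this via the triangle inequality $|f_{\gamma}^{k}(\beta)-b|\preceq|f_{\gamma}^{k}(\beta)-f^{k}(\beta)|+|f^{k}(\beta)-b|$, while you make the same continuity step explicit by an induction on the iteration count). No gaps.
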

\begin{proof}
  (i) Applying the CNI property to the inequality $f(\beta) \preceq
  \beta$ and iterating the process, it follows that $f^{k+1}(\beta)
  \preceq f^k(\beta)$ for all $k\in\integs_{+}$. Also, since $f$ maps
  nonnegative vectors to nonnegative vectors, then $f^k(\beta) \succeq
  0$ for all $k\in\integs_{+}$. It follows that the vectors
  $f^k(\beta)$ form a componentwise nonincreasing sequence which is
  lower bounded by 0. Hence, each component must converge to some
  nonnegative real number and thus~\eqref{eq:b} holds.

  (ii) Note that $|f_{\gamma}^{k}(\beta) - b| \preceq
  |f_{\gamma}^{k}(\beta) - f^{k}(\beta)| + |f^{k}(\beta) -
  b|$. From~\eqref{eq:b}, given $\epsilon \in \R^n_{+}$, we can select
  $k=k(\epsilon)$ such that $|f^{k}(\beta) - b| \prec
  \epsilon/2$. From the definition of $f_\gamma$ and the continuity of
  $f$, it follows that, for the selected value of $k$, we may select
  $\gamma = \gamma(\epsilon) \in \R^n_{+}$ small enough so that
  $|f_{\gamma}^{k}(\beta) - f^{k}(\beta)| \prec \epsilon/2$. Then,
  $|f_{\gamma}^{k}(\beta) - b| \prec \epsilon$, whence
  $f_{\gamma}^{k}(\beta) \prec b + \epsilon$.
\end{proof}

\begin{lem}
  \label{lem:affine}
  Consider the affine function $\ell(x) \triangleq R x +r$ where $r
  \in \R^n_{+0}$ and $R \in \R^{n\times n}_{+0}$ is such that
  $\srad(R) < 1$. Then:
  \begin{enumerate}[(i)]
  \item The function $\ell : \R^n_{+0} \to \R^n_{+0}$ is CNI.\label{item:ellCNI}
  \item For all $\beta \in \R^n_{+0}$, $\lim_{k\to\infty}
    \ell^k(\beta) = \tilde{b} =\ell(\tilde b)$, where
    \begin{equation}
      \label{eq:btilde}
      \tilde{b}= (I-R)^{-1} r.
    \end{equation}\label{item:lim}
  \item For every $v \in \R^n_{+0}$ there exists $\beta \in \R^n_{+}$
    satisfying
    \begin{equation}
      \label{eq:v}
      \ell(\beta) + v \prec \beta.
    \end{equation}\label{item:LINb}
  \item For every $\epsilon \in \R^n_+$, there exists $\beta_\epsilon
    \in \R^n_{+0}$ satisfying $\tilde b \preceq \beta_\epsilon \preceq
    \tilde b+\epsilon$ and $ \ell(\beta_\epsilon) \prec
    \beta_\epsilon$, with $\tilde b$ as in
    (\ref{eq:btilde}).\label{item:LINc}
  \item Let $f : \R^n_{+0} \to \R^n_{+0}$ be a continuous CNI function
    satisfying $f(x) \preceq \ell(x)$ for all $x\in\R^n_{+0}$.  Let
    $\beta \in \R^n_{+}$ be such that \eqref{eq:v} holds for some
    $v\in\R^n_{+0}$, and let $\tilde{b}$ be as in~\eqref{eq:btilde}.
    Then~\eqref{eq:b} holds and, in addition,
    \begin{equation}
      \label{eq:66}
      b= \lim_{k\to\infty} f^k(\tilde b) \preceq \tilde{b}.
    \end{equation}\label{item:LINd}
  \end{enumerate}
\end{lem}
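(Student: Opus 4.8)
The plan is to derive all five items from a single observation about the Neumann series. Since $R\succeq 0$ and $\srad(R)<1$, the series $\sum_{k\ge0}R^{k}$ converges to $(I-R)^{-1}$, every partial sum is nonnegative, and isolating the $k=0$ term shows $(I-R)^{-1}\succeq I$; in particular $(I-R)^{-1}$ maps $\R^{n}_{+0}$ into itself, maps $\R^{n}_{+}$ into itself (because $(I-R)^{-1}y\succeq y$ for $y\succeq 0$), and $\tilde b=(I-R)^{-1}r\succeq0$. I would record this first. Item~(\ref{item:ellCNI}) is then immediate: if $0\preceq x_{1}\preceq x_{2}$ then $Rx_{1}\preceq Rx_{2}$ since $R\succeq0$, so $\ell(x_{1})\preceq\ell(x_{2})$, and $\ell$ clearly maps $\R^{n}_{+0}$ into itself. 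For item~(\ref{item:lim}) I would write $\ell^{k}(\beta)=R^{k}\beta+\big(\sum_{j=0}^{k-1}R^{j}\big)r$ and let $k\to\infty$, using $R^{k}\to0$ and the Neumann series, to get $\lim_{k}\ell^{k}(\beta)=(I-R)^{-1}r=\tilde b$; the fixed-point identity follows from $R(I-R)^{-1}+I=(I-R)^{-1}$, whence $\ell(\tilde b)=R\tilde b+r=\tilde b$.

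For item~(\ref{item:LINb}), given $v\in\R^{n}_{+0}$ I would set $\beta\dfn(I-R)^{-1}u$ with $u=r+v+\mathbf{1}$, writing $\mathbf{1}$ for the all-ones vector. Then $\beta\succeq u\succ0$, and $(I-R)\beta=u\succ r+v$ rearranges to $\beta\succ R\beta+r+v=\ell(\beta)+v$, which is \eqref{eq:v}. Item~(\ref{item:LINc}) uses the same construction with $v=0$ and $u=r+s\mathbf{1}$ for a scalar $s>0$, yielding $\beta_{\epsilon}=\tilde b+s(I-R)^{-1}\mathbf{1}$; this satisfies $\ell(\beta_{\epsilon})\prec\beta_{\epsilon}$ for every $s>0$ and $\beta_{\epsilon}\succeq\tilde b$, and lies below $\tilde b+\epsilon$ as soon as $s$ is small enough that $s(I-R)^{-1}\mathbf{1}\preceq\epsilon$, which is possible because $(I-R)^{-1}\mathbf{1}$ has finite positive entries.

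Item~(\ref{item:LINd}) is the substantive one. First, from $v\succeq0$ and $f\preceq\ell$ I get $f(\beta)\preceq\ell(\beta)\preceq\ell(\beta)+v\prec\beta$, so Lemma~\ref{lem:CNI}(i) applied to $f$ at $\beta$ yields that $f^{k}(\beta)$ decreases to some $b\succeq0$ --- this is \eqref{eq:b} --- and continuity of $f$ forces $f(b)=b$. Likewise $f(\tilde b)\preceq\ell(\tilde b)=\tilde b$ by item~(\ref{item:lim}), so Lemma~\ref{lem:CNI}(i) applied to $f$ at $\tilde b$ yields that $f^{k}(\tilde b)$ decreases to some $b'\succeq0$ with $b'\preceq\tilde b$. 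It remains to show $b=b'$, which I would do by a sandwich. From $\ell(\beta)\prec\beta$, i.e. $(I-R)\beta\succ r=(I-R)\tilde b$, and nonnegativity of $(I-R)^{-1}$, I get $\beta\succeq\tilde b$; the CNI property then gives $f^{k}(\tilde b)\preceq f^{k}(\beta)$ for all $k$, hence $b'\preceq b$. Conversely, $b=f(b)\preceq\ell(b)=Rb+r$ gives $(I-R)b\preceq r$, hence $b\preceq(I-R)^{-1}r=\tilde b$; since $b$ is a fixed point of $f$ with $b\preceq\tilde b$, the CNI property gives $b=f^{k}(b)\preceq f^{k}(\tilde b)$ for all $k$, hence $b\preceq b'$. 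Thus $b=b'$ and $b=\lim_{k}f^{k}(\tilde b)\preceq\tilde b$, which is \eqref{eq:66}.

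The one place requiring genuine care is item~(\ref{item:LINd}): the key realization is that $f$ actually has a fixed point $b$ (monotone convergence plus continuity of $f$), because only then can the order relations $b\preceq\tilde b\preceq\beta$ be pushed through the iterates $f^{k}$, producing the sandwich $b\preceq b'\preceq b$ that identifies the two limits. Items~(\ref{item:ellCNI})--(\ref{item:LINc}) are routine manipulations with the Neumann series and the componentwise order; the main thing to keep track of there is strictness ($\prec$ versus $\preceq$) when invoking $(I-R)^{-1}y\succeq y$.
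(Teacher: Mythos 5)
Your proposal is correct, and for the constructive parts it takes a genuinely different route from the paper. The paper's proof is anchored on the Perron--Frobenius theorem: it perturbs $R$ to a strictly positive $R_\epsilon \succ R$ with $\srad(R_\epsilon)<1$, extracts a positive eigenvector $x\succ 0$ satisfying $Rx \prec x$, and then builds the vectors needed in items~(\ref{item:LINb}) and (\ref{item:LINc}) as scalar multiples $\alpha x$ and translates $\tilde b + \alpha_\epsilon x$ of that eigenvector. You instead work entirely from the Neumann series, observing that $(I-R)^{-1}=\sum_{k\ge 0}R^k \succeq I$ and hence $(I-R)^{-1}y\succeq y$ for $y\succeq 0$, and you construct the required $\beta$ directly as $(I-R)^{-1}(r+v+\mathbf{1})$ and $\beta_\epsilon = \tilde b + s(I-R)^{-1}\mathbf{1}$. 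Your route is more elementary (no eigenvector machinery, no perturbation of $R$ to force strict positivity) and yields explicit formulas; the paper's eigenvector route gives slightly more geometric insight into the ``direction'' along which $\ell$ is contracting. For item~(\ref{item:LINd}) the two arguments are essentially the same sandwich: the paper passes to the limit in $f^k(\beta)\preceq \ell^k(\beta)$ to get $b\preceq\tilde b$ and then iterates $b=f^k(b)\preceq f^k(\tilde b)\preceq f^k(\beta)$, while you additionally derive $b\preceq\tilde b$ from the fixed-point relation $f(b)=b\preceq \ell(b)$ and spell out both directions of the comparison; both hinge on the same key point you identify, namely that continuity of $f$ makes $b$ a fixed point so the order relations can be pushed through the iterates. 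No gaps.
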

\begin{proof}
  By assumption we have $R\succeq 0$ and $\srad(R)<1$. Let
  $R_\epsilon$ be a slight perturbation of $R$ so that $R_\epsilon
  \succ R$ and $\srad(R_\epsilon)<1$. Then, $R_\epsilon \succ 0$ and
  by the Perron-Frobenius Theorem (see, e.g. Theorem~8.2.2 of
  \cite{horjoh_book85}) then $\srad(R_\epsilon) > 0$ and there exists
  $x\succ 0$ such that $R_\epsilon x = \srad(R_\epsilon) x$. It
  follows that
  \begin{equation}
    \label{eq:26a}
    Rx \prec R_\epsilon x = \srad(R_\epsilon) x \prec x.
  \end{equation}
  
  (\ref{item:ellCNI}) Immediate from the fact that $R \succeq 0$.

  (\ref{item:lim}) Immediate from the assumption $\srad(R)<1$.
  
  (\ref{item:LINb}) From \eqref{eq:26a}, $y \dfn (\id - R)x \succ 0$. Define $z \dfn
    r+v$ and let $y_i$ and $z_i$ denote the $i$-th components of $y$
    and $z$, respectively. Select $\alpha > 0$ so that
    \begin{equation}
      \label{eq:27a}
      \alpha > \max_{i=1,\ldots,n} \left\{ \frac{z_i}{y_i} \right\}
    \end{equation}
    and define $\beta\dfn \alpha x$. Note that $\beta \succ 0$. Then,
    $\alpha y =(\id-R)\beta \succ r+v$. Operating on the latter
    inequality yields $R \beta+r+v =\ell(\beta) +v\prec
    \beta$, and the result follows.
  
    (\ref{item:LINc}) Since $x \succ 0$, for every $\alpha > 0$ we have $\tilde
    b+\alpha x \succ \tilde b$ and
    \begin{equation}
      \label{eq:30a}
      \ell(\tilde b+\alpha x) = R(\tilde b+\alpha x) +r  
      = \tilde b + \alpha Rx,
    \end{equation}
    where we have used \eqref{eq:btilde}. From \eqref{eq:26a} and
    \eqref{eq:30a}, it follows that $\ell(\tilde b+\alpha x)
    \prec \tilde b+\alpha x$ for every $\alpha > 0$. Given $\epsilon
    \in \R^n_+$, select $\alpha_\epsilon$ satisfying
    \begin{equation}
      \label{eq:39}
      0 < \alpha_\epsilon \le \min_{i=1,\ldots,n} 
      \left\{ \frac{\epsilon_i}{x_i} \right\}
    \end{equation}
    and define $\beta_\epsilon = \tilde b + \alpha_\epsilon x$. Then,
    $\tilde b \preceq \beta_\epsilon \preceq \tilde b+\epsilon$ and
    $\ell (\beta_\epsilon) \prec \beta_\epsilon$, establishing
    part~(\ref{item:LINc}).
  
    (\ref{item:LINd}) Note that \eqref{eq:v} with $v\succeq 0$ implies
    $\ell (\beta) \prec \beta$.  We then have $f(\beta) \preceq \ell
    (\beta) \prec \beta$. Also, by
    Lemma~\ref{lem:CNI}(\ref{item:CNIa}), then \eqref{eq:b} holds.
    Since both $f$ and $\ell$ are CNI and $f(x) \preceq \ell(x)$ for
    all $x\in\R^n_{+0}$, then $f^k(\beta) \preceq \ell^k(\beta) \prec
    \beta$ for all $k\in\integs_+$, whence applying limits yields $b
    \preceq \tilde b \prec \beta$. Applying the CNI property of $f$ to
    the latter inequalities, and iterating, yields $b = f^k(b) \preceq
    f^k(\tilde b) \preceq f^k(\beta)$, whence $b \preceq
    \lim_{k\to\infty} f^k(\tilde b) \preceq b$. 
    We have thus established~\eqref{eq:66}.
\end{proof}

\subsection{Proof of Theorem~\ref{thm:cdlf}}
\label{sec:proof-theorem}

  \ref{item:1}) Since $\Lambda$ is Metzler and
  $\bar\Lambda\cge\Lambda$, then $\bar\Lambda$ also is Metzler. Since
  $\bar\Lambda$ is then Metzler and Hurwitz, it admits a diagonal
  Lyapunov function (see, e.g. \cite[Ch.6]{berple_book94}).

  \ref{item:2}) Since $\bar\Lambda$ is Metzler and $D$ is diagonal
  with positive main-diagonal entries, then
  $\bar\Lambda'D+D\bar\Lambda$ is Metzler and symmetric. Combining the
  latter fact with Lemma~\ref{lem:propMetzler}d) and (\ref{eq:2}),
  then
  \begin{equation}
    \label{eq:3}
    x' (\bar\Lambda'D+D\bar\Lambda) x \le |x|'(\bar\Lambda'D+D\bar\Lambda)|x| < 0
  \end{equation}
  for all nonzero $x\in\R^n$. 
  Since $\Lambda \cle \bar\Lambda$, then $\Lambda'D+D\Lambda \cle
  \bar\Lambda'D+D\bar\Lambda$ and hence
  \begin{equation}
    \label{eq:17}
    |x|' (\Lambda'D+D\Lambda) |x| \le |x|'
    (\bar\Lambda'D+D\bar\Lambda) |x|
  \end{equation}
  for all $x\in\R^n$. 
  Combining (\ref{eq:3})--(\ref{eq:17}) and
  Lemma~\ref{lem:propMetzler}d), then 
  \begin{equation}
    \label{eq:5}
    \Lambda' D + D\Lambda < 0.
  \end{equation}
  This establishes that $\Lambda$ is Hurwitz.

  \ref{item:3}) Since $\Lambda$ satisfies (\ref{eq:5}) and by (\ref{eq:88}) $M_i
  \cle \Lambda$ and are Metzler, arguments identical to those in the
  proof of part \ref{item:2}) above show that
  \begin{equation}
    \label{eq:7}
    M_i'D+DM_i < 0,\qquad \text{for all }i\in\Ind.
  \end{equation}
  By (\ref{eq:88}) and since $D$ is diagonal with positive
  main-diagonal entries, then $\Met(\Lambda_i^* D + D\Lambda_i)\cle
  M_i'D+DM_i$. The latter fact implies that
  \begin{equation}
    \label{eq:8}
    |z|' \Met(\Lambda_i^* D + D\Lambda_i) |z| \le |z|'(M_i'D+DM_i)|z|
  \end{equation}
  for all $z\in\C^n$. By Lemma~\ref{lem:propMetzler}e) and
  combining with (\ref{eq:7})--(\ref{eq:8}), it follows that
  \begin{equation}
    \label{eq:21}
    z^* (\Lambda_i^* D + D\Lambda_i) z < 0
  \end{equation}
  for all nonzero $z \in \C^n$. 
  Therefore $\Lambda_i^* D + D \Lambda_i <
  0$ and hence, using (\ref{eq:88}), then $V^* A_i' (V^{-1})^* D + D
  V^{-1} A_i V < 0$. Left-multiplying by $(V^{-1})^*$ and
  right-multiplying by $V^{-1}$ yields $A_i' (V^{-1})^* D V^{-1} +
  (V^{-1})^* D V^{-1} A_i < 0$, whence $A_i' P + P A_i < 0$.

\subsection{Proof of Theorem~\ref{thm:nlpertct}}
\label{sec:proof-theorem-1}

  (\ref{item:5a}) Since $-\Lambda^{-1} \succeq 0$ (see Lemma~\ref{lem:propMetzler})
  and $\delta$ is CNI, then the maps $T_\gamma$ defined
  in~\eqref{eq:19} are CNI for every $\gamma \in
  \R^n_{+0}$. Part~(\ref{item:5a}) then follows by applying
  Lemma~\ref{lem:CNI}(\ref{item:CNIa}) with~$f=T_0$.

  (\ref{item:6}) Since $-\Lambda^{-1} \succeq 0$, then
  \begin{equation}
    \label{eq:8a}
    \gamma = (-\Lambda^{-1})(-\Lambda)\gamma \preceq -\Lambda^{-1} \max\{ -\Lambda\gamma, 0\}.
  \end{equation}
  Adding $-\Lambda^{-1}\delta(\beta)$ to each side of the inequality
  (\ref{eq:8a}), recalling (\ref{eq:19}), and using the assumption,
  yields
  \begin{equation}
    \label{eq:10}
    T_\gamma(\beta) \preceq -\Lambda^{-1} [\delta(\beta) + \max\{ -\Lambda\gamma, 0\}] \prec \beta.
  \end{equation}
  Let $t_c$ be the largest time instant for which
  \begin{equation}
    \label{eq:34}
    |V^{-1} x(t)| \preceq \beta,\quad \text{for all } -\bar\tau
    \le t \le t_c.
  \end{equation}
  Note that $t_c > 0$ necessarily since $|V^{-1} x(t)| \preceq
  T_\gamma(\beta)$ for all $-\bar\tau \le t \le 0$ by assumption, and
  $T_\gamma(\beta) \prec \beta$ by (\ref{eq:10}). It follows from
  (\ref{eq:barx}) that
  \begin{align}
    \label{eq:25a}
    \theta(t) &= \max_{t-\bar\tau \le \tau \le t} |VV^{-1}x(\tau)|\\
    \label{eq:23a}
    &\preceq \max_{t-\bar\tau \le \tau \le t} |V| |V^{-1}x(\tau)| \preceq |V|\beta,
  \end{align}
  for all $0 \le t \le t_c$. From~(\ref{eq:25a})--(\ref{eq:23a}) and
  since $\delta_i$ are CNI, then $\delta_i(\theta(t)) \cle
  \delta_i(|V|\beta)$ for all $0 \le t \le t_c$. Recalling
  (\ref{eq:boundw}), then $|w_i(t)| \preceq \delta_i(|V| \beta)$ for
  all $0 \le t \le t_c$. Define
  \begin{align}
    \label{eq:35}
    \bw_i &\dfn \delta_i(|V| \beta),\\
    \label{eq:37}
    \rb &\dfn \delta(\beta) + \max \big\{ - \Lambda \gamma, 0 \big\},
  \end{align}
  with $\delta$ satisfying (\ref{eq:6}), and note that by
  (\ref{eq:10}), then $T_\gamma(\beta) \preceq -\Lambda^{-1} \rb \prec
  \beta$. Combining the latter inequality with the assumption on the
  initial condition, it follows that $|V^{-1} x(0)| \preceq
  T_\gamma(\beta) \preceq -\Lambda^{-1} \rb$, whence $|V^{-1} x(0)| +
  \Lambda^{-1} \rb \preceq 0$. Applying Theorem~\ref{thm:mainct}, it
  follows that (\ref{eq:80}) holds with $\eta=0$ for all $0 \le t \le
  t_c$. Hence, $|V^{-1} x(t)| \preceq -\Lambda^{-1} \rb \prec \beta$
  for all $-\bar\tau \le t \le t_c$. Since $x(t)$ is continuous, there
  exists $\alpha > 0$ such that $|V^{-1} x(t)| \preceq \beta$ for all
  $-\bar\tau \le t \le t_c + \alpha$. Consequently, $t_c = \infty$ or
  otherwise the fact that $t_c$ is the largest time instant for which
  (\ref{eq:34}) holds would be contradicted.

  (\ref{item:9}) Since $\Lambda$ is Metzler and Hurwitz, then $-\Lambda$ is an
  M-matrix and $-\Lambda^{-1} \cge 0$ by
  Lemma~\ref{lem:propMetzler}. Since $c\succ 0$, then there exists $j$
  such that $(-\Lambda c)_j > 0$ \cite[Theorem 2.3,
  Ch.6]{berple_book94} and hence $p(c) \neq 0$. Consequently,
  $-\Lambda^{-1}p(c) \neq 0$ and the constraint set of the minimum in
  (\ref{eq:67}) is non-empty. The assumption that $T_0(\beta) \prec
  \beta$ implies that
  \begin{equation}
    \label{eq:69}
    \beta + \Lambda^{-1}\delta(\beta) \succ 0.
  \end{equation}
  Since $-\Lambda^{-1}\cge 0$, $p(c) \cge 0$ and $p(c) \neq 0$, then
  $[-\Lambda^{-1}p(c)]_j > 0$ for every $j$ such that
  $[-\Lambda^{-1}p(c)]_j\neq 0$. These facts jointly with
  (\ref{eq:69}) establish that $\bar\epsilon > 0$.  By (\ref{eq:68}),
  it follows that $\max\{-\Lambda c,0\} = p(c)$ and $p(c\epsilon) =
  p(c)\epsilon$ for every $\epsilon>0$. Hence,
  \begin{equation}
    \label{eq:71}
    0 \cle \max\{-\Lambda c\epsilon,0\} = p(c)\epsilon 
  \end{equation}
  for every $\epsilon > 0$. By (\ref{eq:67}) and (\ref{eq:69}), we
  have
  \begin{equation}
    \label{eq:70}
    -\Lambda^{-1}p(c) \bar\epsilon \cle \beta+\Lambda^{-1}\delta(\beta).
  \end{equation}
  Since $-\Lambda^{-1}\cge 0$, then $[-\Lambda^{-1}p(c)\epsilon]_j <
  [-\Lambda^{-1}p(c)\bar\epsilon]_j$ for every $j$ for which
  $[-\Lambda^{-1}p(c)]_j > 0$ and $0<\epsilon<\bar\epsilon$. Combining
  with (\ref{eq:69})--(\ref{eq:70}), it follows that
  \begin{equation}
    \label{eq:72}
    -\Lambda^{-1}\max\{-\Lambda c\epsilon,0\} \prec \beta+\Lambda^{-1}\delta(\beta),
  \end{equation}
  for every $0<\epsilon<\bar\epsilon$, whence (\ref{item:9}) follows by
  subtracting $\Lambda^{-1}\delta(\beta)$ from each side of the
  inequality (\ref{eq:72}).

  (\ref{item:7}) We first show that, for every $\gamma \in \R^n_+$ and every $k \in
  \integs_+$, there exists a finite time $t_f(k,\gamma)$ such that
    \begin{equation}
      \label{eq:38}
      |V^{-1} x(t)| \preceq T_\gamma^k(\beta),\quad
      \text{for all }t \ge t_f(k,\gamma).
    \end{equation}
    We proceed by induction on $k$. Since $|V^{-1} x(t)| \preceq
    \beta$ for all $t \ge -\bar\tau$, then $\theta(t) \preceq |V|
    \beta$ and hence $|w_i(t)| \preceq \delta_i(|V|\beta)$ for all $t
    \ge 0$. Consider $\bw_i$ as in (\ref{eq:35}) and define
    \begin{equation}
      \label{eq:41}
      \rb \dfn \delta(\beta).
    \end{equation}
    Applying Theorem~\ref{thm:mainct}, it follows that (\ref{eq:29a})
    holds, and hence given $\gamma \in \R^n_+$, there exists
    $t_f(1,\gamma)$ such that $|V^{-1} x(t)| \preceq -\Lambda^{-1} \rb
    + \gamma = T_\gamma(\beta)$ for all $t \ge t_f(1,\gamma)$. The
    claim is thus true for $k=1$. %
    Next, suppose that 
    \eqref{eq:38} is true for some $k \in \integs_+$. It follows from
    (\ref{eq:38}), (\ref{eq:barx}) and (\ref{eq:boundw}) that
    $\theta(t) \preceq |V| T_\gamma^k(\beta)$ and hence $|w_i(t)|
    \preceq \delta_i(|V| T_\gamma^k(\beta))$ for all $t \ge
    t_f(k,\gamma) + \bar\tau$. Define $\bw_i^k \dfn \delta_i(|V|
    T_\gamma^k(\beta))$ and $\rb^k \dfn \delta(
    T_\gamma^k(\beta))$. Taking into account that the system
    is time-invariant, we may apply Theorem~\ref{thm:mainct} to the
    system, considering $t_f(k,\gamma) + \bar\tau$ as the initial
    time. From Theorem~\ref{thm:mainct}, it follows that $\limsup_{t
      \to \infty} |V^{-1} x(t)| \preceq -\Lambda^{-1} \rb^k$. Hence,
    for every $\gamma \in \R^n_+$, there exists $t_f(k+1,\gamma)$ such
    that $|V^{-1} x(t)| \preceq -\Lambda^{-1} \rb^k + \gamma =
    T_\gamma^{k+1}(\beta)$ for all $t \ge t_f(k+1,\gamma)$. Therefore,
    \eqref{eq:38} holds for $k+1$ and the proof by induction is
    complete.

    Next, given $\epsilon \in \R^n_+$, we use
    Lemma~\ref{lem:CNI}(\ref{item:CNIb}) with~$f_\gamma=T_\gamma$
    to obtain $\gamma$ and $k$ so that $T_\gamma^k(\beta) \prec b +
    \epsilon$. For such values of $\gamma$ and $k$,
    we can find, as shown above, a time $t_f(k,\gamma)$ so that
    (\ref{eq:38}) holds. Since this happens for every $\epsilon \in
    \R^n_+$, it follows that $\limsup_{t \to \infty} |V^{-1} x(t)|
    \preceq b$.

\subsection{Proof of Corollary~\ref{cor:nlinv}}
\label{sec:proof-corollary}

  By Theorem~\ref{thm:nlpertct}(\ref{item:5a}), we have $b \preceq
  T_0(\beta) \prec \beta$. Then, for every $\epsilon \in \R^n_{+}$
  small enough, the corresponding $\beta_\epsilon$ satisfies
  $\beta_\epsilon \preceq b + \epsilon \preceq \beta$. Applying $T_0$
  to the inequality $b \preceq \beta_\epsilon \preceq \beta$, and
  iterating, yields $T_0^k(b) = b \preceq T_0^k(\beta_\epsilon)
  \preceq T_0^k(\beta)$ for every $k\in\integs_+$. We thus have
  $\lim_{k \to \infty} T_0^k(\beta_\epsilon) = b \preceq
  T_0(\beta_\epsilon) \prec \beta_\epsilon$. Hence $|V^{-1} x(t)|
  \preceq b \preceq T_0(\beta_\epsilon)$ for all $-\bar\tau \le t \le
  0$. From Theorem~\ref{thm:nlpertct}(\ref{item:6}), then $|V^{-1}
  x(t)| \preceq \beta_\epsilon \preceq b+\epsilon$ for all $t \ge
  -\bar\tau$. Since the latter happens for every $\epsilon \in \R^n_+$
  small enough, then $|V^{-1} x(t)| \preceq b$ for all $t \ge
  -\bar\tau$.

\subsection{Proof of Corollary~\ref{cor:gub}}
\label{sec:proof-corollary-1}

  From (\ref{eq:19}), (\ref{eq:8a}) and (\ref{eq:1a}), it follows that
  \begin{equation*}
    T_0(\beta) \preceq T_\gamma(\beta) \preceq -\Lambda^{-1} 
    [\delta(\beta) + \max\{- \Lambda \gamma, 0\}] \prec \beta.
  \end{equation*}
  Application of Theorem~\ref{thm:nlpertct}(\ref{item:5a}) shows that
  $\lim_{k\to\infty} T_0^k(\beta) = b \succeq 0$,
  Theorem~\ref{thm:nlpertct}(\ref{item:6}) that $|V^{-1} x(t)| \preceq
  \beta$ for all $t \ge -\bar\tau$, and
  Theorem~\ref{thm:nlpertct}(\ref{item:7}) that $\limsup_{t\to \infty}
  |V^{-1} x(t)| \preceq b$.

\subsection{Proof of Lemma~\ref{lem:RLamF}}
\label{sec:proof-lemma-2}

  \ref{item:4}) Since $\Lambda$ is Metzler and Hurwitz and $\bar F\cge 0$, then
  $\Lambda+\bar F$ is Metzler, $-\Lambda^{-1} \cge 0$ and $R\cge 0$.
  Since $\srad(R)<1$ and $R\cge 0$, then $(\id-R)^{-1}
  \cge 0$. We have $\Lambda+\bar F = \Lambda (\id - R)$ and hence
  $(\Lambda+\bar F)^{-1} = (\id - R)^{-1} \Lambda^{-1} \cle 0$. By
  Lemma~\ref{lem:propMetzler}, then $\Lambda+\bar F$ is Hurwitz.

  \ref{item:5}) Since $\Lambda$ is Metzler and $\bar F\cge 0$, then $\Lambda+\bar
  F$ is Metzler. Since $\Lambda+\bar F$ also is Hurwitz, then
  Theorem~\ref{thm:cdlf}\ref{item:2}) establishes that $\Lambda$ is Hurwitz. We
  have $-(\Lambda + \bar F)^{-1} \cge 0$ and $-(\Lambda + \bar F)
  = -\Lambda -\bar F$, where $-\Lambda^{-1} \cge 0$ and $\bar F \cge
  0$. Consequently, $-\Lambda - \bar F$ is a regular splitting of the
  inverse-positive matrix $-(\Lambda+\bar F)$, and hence must be
  convergent, i.e. $\srad(-\Lambda^{-1}\bar F) = \srad(R) < 1$ (see,
  e.g. \cite[Ch.6~\S2]{berple_book94}).

\subsection{Proof of Theorem~\ref{thm:UBlinbnd}}
\label{sec:proof-theorem-2}

  Since $\bar F$ and $\bar w$ have nonnegative entries, then
  $\tilde\delta : \R^n_{+0} \to \R^n_{+0}$ and is CNI. For every
  $\gamma \in \R^n_{+0}$, consider the function $\tilde T_\gamma :
  \R^n_{+0} \to \R^n_{+0}$ defined as
  \begin{equation}
    \label{eq:11}
    \tilde T_\gamma(x) = -\Lambda^{-1} \tilde\delta(x) + \gamma.
  \end{equation}
  Using (\ref{eq:15a}) and \eqref{eq:4a}, we have
  \begin{equation}
    \label{eq:14a}
    \tilde T_\gamma(\beta) = R\beta - \Lambda^{-1} \bar w + \gamma.
  \end{equation}
  Since $\Lambda$ is Metzler and Hurwitz, then $-\Lambda^{-1} \succeq
  0$ by Lemma~\ref{lem:propMetzler}, and hence $R\succeq 0$ and
  $\tilde T_\gamma$ is CNI. Also, by (\ref{eq:5a}), (\ref{eq:14a}) and since
  $\srad(R) < 1$, we have
  \begin{equation}
    \label{eq:32}
    \lim_{k\to\infty} \tilde T_0^k(\beta) = \tilde b = \tilde T_0(\tilde b),
    \quad \text{for all }\beta\in\R^n_{+0}.
  \end{equation}

  (\ref{item:1a}) Applying Lemma~\ref{lem:affine}(\ref{item:LINc}) to
  $\tilde T_0$ (hence $\tilde{b}$ in~\eqref{eq:btilde} has the
  form~\eqref{eq:5a}) we have that the hypotheses of
  Corollary~\ref{cor:nlinv} are satisfied, establishing
  (\ref{item:1a}).

  (\ref{item:2a}) From~\eqref{eq:11} we have $\xi \preceq \tilde
  T_\xi(\beta)$ for every $\xi,\beta \in \R^n_{+0}$. For each $\xi \in
  \R^n_{+0}$, let $v\triangleq -\Lambda^{-1} \max\{-\Lambda \xi,
  0\}$. Note that $v \in \R^n_{+0}$. Applying
  Lemma~\ref{lem:affine}(\ref{item:LINb}) with $v$ as defined and
  $\ell(x)= -\Lambda^{-1} \tilde\delta(x) = R x -\Lambda^{-1} \bar
  w$
  gives $\beta$ satisfying 
  \begin{equation}
    \label{eq:47}
    \ell(\beta)+v=-\Lambda^{-1}
    [\tilde\delta(\beta) + \max\{-\Lambda \xi, 0\}] \prec \beta. 
  \end{equation}
  Then, (\ref{eq:43}) and (\ref{eq:1a}) are satisfied with
  $\gamma=\xi$. Hence, application of Corollary~\ref{cor:gub} and
  recalling (\ref{eq:32}) establishes (\ref{item:2a}).
  
  (\ref{item:8}) Using $\delta(x)$ satisfying~\eqref{eq:52}, define
  $T_\gamma$ as in~(\ref{eq:19}) and consider $\tilde T_\gamma$
  defined in~(\ref{eq:11}). 
  By (\ref{eq:52}) we have $\xi \preceq T_\xi(\beta) \preceq \tilde
  T_\xi(\beta)$ for every $\xi,\beta \in \R^n_{+0}$. For each $\xi \in
  \R^n_{+0}$, we showed above that we can find $\beta$
  satisfying the inequality in~\eqref{eq:47}.
  By (\ref{eq:52}), then
  \begin{equation}
    \label{eq:51}
    -\Lambda^{-1} [\delta(\beta) + \max\{-\Lambda \xi, 0\}] \prec \beta.
  \end{equation}
  Then, (\ref{eq:43}) and (\ref{eq:1a}) are satisfied with
  $\gamma=\xi$. Also, note that (\ref{eq:51}) implies that
  $T_0(\beta)=-\Lambda^{-1} \delta(\beta) \prec \beta$. According to
  Theorem~\ref{thm:nlpertct}(\ref{item:5a}) then
  $b \dfn \lim_{k\to\infty} T_0^k(\beta) \succeq 0$,
  and application of Corollary~\ref{cor:gub} establishes that
  $\limsup_{t\to\infty} |V^{-1} x(t)| \preceq b$. Applying
  Lemma~\ref{lem:affine}(\ref{item:LINd}) with $f(x)=T_0(x)$ and
  $\ell(x)=\tilde T_0(x)$ yields $b = \lim_{k\to\infty} T_0^k(\tilde
  b) \preceq \tilde b$, concluding the proof of (\ref{item:8}).

  (\ref{item:10}) Since $\rho(R)<1$, by Lemma~\ref{lem:RLamF} then
  $\Lambda+\bar F$ is Hurwitz. Application of Theorem~\ref{thm:cdlf}
  with $\bar\Lambda=\Lambda+\bar F$ establishes (\ref{eq:14}).

  (\ref{item:11}) For every $i\in\Ind$, define $p_i(t) \dfn V^{-1}H_i w_i(t)$. Let $x=Vz$
  and rewrite (\ref{eq:system}) as
  \begin{equation}
    \label{eq:15}
    \dot z(t) = \Lambda_{\sw(t)} z(t) + p_{\sw(t)}(t).
  \end{equation}
  Using (\ref{eq:boundw})--(\ref{eq:barx}) with $\bar\tau=0$, it
  follows that, for all $i\in\Ind$,
  \begin{align}
    \label{eq:16}
    |p_i(t)| &\cle \max_{i\in\Ind}\left[\max_{|w_i| \cle
        \delta_i(|Vz(t)|)} |V^{-1} H_i w_i|\right]\\
    \label{eq:22}
    &\cle \psi(|z(t)|) \cle \tilde\delta(|z(t)|) = \bar F |z(t)| + \bar w,
  \end{align}
  where the first inequality in (\ref{eq:22}) follows from $|Vz| \cle
  |V||z|$ and $\delta_i$ CNI. Consider the function $L_z(z) = z^* D z$. We have
  \begin{align*}
    \dot L_z(t,z) &= z^* (\Lambda_{\sw(t)}^* D+D\Lambda_{\sw(t)}) z + 2 \real\{z^*
    D p_{\sw(t)}(t)\}
  \end{align*}
  By Lemma~\ref{lem:propMetzler}e), $z^* (\Lambda_i^* D+D\Lambda_i) z
  \le |z^*| \Met(\Lambda_i^* D+D\Lambda_i) |z|$ for all $z\in\C^n$ and
  all $i\in\Ind$. Arguments identical to those in the proof of
  Theorem~\ref{thm:cdlf}\ref{item:3}) show that $\Met(\Lambda_i^* D+D\Lambda_i)
  \cle M_i' D + D M_i \cle \Lambda' D + D \Lambda$, for all
  $i\in\Ind$. It follows that
  \begin{align}
    \dot L_z(t,z) &\le |z^*| (\Lambda' D+D\Lambda) |z| + 2 |z^*| D
    |p_{\sw(t)}|\notag\\
    \label{eq:18}
    &\le |z^*| [(\Lambda + \bar F)'D+D(\Lambda + \bar F)] |z| + 2|z^*|
    D \bar w,
  \end{align}
  where we have used (\ref{eq:22}).
  Next, taking $x\in\R^n$, we have
  \begin{eqnarray}
    \dot L_z(t,V^{-1}x) &=& x' [A_i' (V^{-1})^* D V^{-1} + (V^{-1})^* D
    V^{-1} A_i] x\notag\\
     & &+ 2 \real \{x' (V^{-1})^* D p_{\sw(t)}(t)\}\notag\\
     &=& x' [A_i' P + P A_i] x + 2 x' P H_{\sw(t)} w_{\sw(t)}\notag\\
     \label{eq:20}
     &=& \dot L(t,x).
  \end{eqnarray}
  Combining (\ref{eq:18})--(\ref{eq:20}) and recalling (\ref{eq:14}),
  it follows that $\dot L(t,x) < 0$ for all $t$ and all $x\in\R^n$
  such that $\|x\|$ is big enough.


\bibliographystyle{plain}
\bibliography{SwitchDTCT}

\end{document}